\theoremstyle{plain}
\newtheorem{theorem}{Theorem}
\newtheorem{definition}[theorem]{Definition}
\newtheorem{lemma}[theorem]{Lemma}
\newtheorem{proposition}[theorem]{Proposition}
\newtheorem{corollary}[theorem]{Corollary}
\bmdefine{\bA}{A}
\bmdefine{\ba}{a}
\bmdefine{\bB}{B}
\bmdefine{\bb}{b}
\bmdefine{\bC}{C}
\bmdefine{\bc}{c}
\bmdefine{\bD}{D}
\bmdefine{\bd}{d}
\bmdefine{\bE}{E}
\bmdefine{\be}{e}
\bmdefine{\bF}{F}
\bmdefine{\bf}{f}
\bmdefine{\bG}{G}
\bmdefine{\bg}{g}
\bmdefine{\bH}{H}
\bmdefine{\bh}{h}
\bmdefine{\bI}{I}
\bmdefine{\bi}{i}
\bmdefine{\bJ}{J}
\bmdefine{\bj}{j}
\bmdefine{\bK}{K}
\bmdefine{\bk}{k}
\bmdefine{\bL}{L}
\bmdefine{\bl}{l}
\bmdefine{\bM}{M}
\bmdefine{\bmm}{m}
\bmdefine{\bN}{N}
\bmdefine{\bn}{n}
\bmdefine{\bO}{O}
\bmdefine{\bo}{o}
\bmdefine{\bP}{P}
\bmdefine{\bp}{p}
\bmdefine{\bQ}{Q}
\bmdefine{\bq}{q}
\bmdefine{\bR}{R}
\bmdefine{\br}{r}
\bmdefine{\bS}{S}
\bmdefine{\bs}{s}
\bmdefine{\bT}{T}
\bmdefine{\bt}{t}
\bmdefine{\bU}{U}
\bmdefine{\bu}{u}
\bmdefine{\bV}{V}
\bmdefine{\bv}{v}
\bmdefine{\bW}{W}
\bmdefine{\bw}{w}
\bmdefine{\bX}{X}
\bmdefine{\bx}{x}
\bmdefine{\bY}{Y}
\bmdefine{\by}{y}
\bmdefine{\bZ}{Z}
\bmdefine{\bz}{z}
\bmdefine{\balpha}{\alpha}
\bmdefine{\bbeta}{\beta}
\bmdefine{\bgamma}{\gamma}
\bmdefine{\bGamma}{\Gamma}
\bmdefine{\bdelta}{\delta}
\bmdefine{\bDelta}{\Delta}
\bmdefine{\bepsilon}{\epsilon}
\bmdefine{\bvarepsilon}{\varepsilon}
\bmdefine{\bzeta}{\zeta}
\bmdefine{\bmeta}{\eta}
\bmdefine{\btheta}{\theta}
\bmdefine{\bTheta}{\Theta}
\bmdefine{\biota}{\iota}
\bmdefine{\bkappa}{\kappa}
\bmdefine{\blambda}{\lambda}
\bmdefine{\bLambda}{\Lambda}
\bmdefine{\bmu}{\mu}
\bmdefine{\bnu}{\nu}
\bmdefine{\bpi}{\pi}
\bmdefine{\bPi}{\Pi}
\bmdefine{\brho}{\rho}
\bmdefine{\bsigma}{\sigma}
\bmdefine{\bSigma}{\Sigma}
\bmdefine{\btau}{\tau}
\bmdefine{\bupsilon}{\upsilon}
\bmdefine{\bUpsilon}{\Upsilon}
\bmdefine{\bphi}{\phi}
\bmdefine{\bPhi}{\Phi}
\bmdefine{\bchi}{\chi}
\bmdefine{\bpsi}{\psi}
\bmdefine{\bPsi}{\Psi}
\bmdefine{\bomega}{\omega}
\bmdefine{\bOmage}{\Omega}
\newcommand{\cB}{{\mathcal{B}}}
\newcommand{\cN}{\mathcal{N}}
\newcommand{\bbP}{\mathbb{P}}
\newcommand{\bbR}{\mathbb{R}}
\newcommand{\sT}{{\mathsf{T}}}
\newcommand{\sF}{{\mathsf{F}}}
\newcommand{\rE}{\mathrm{E}}
\DeclareMathOperator*{\argmin}{arg\,min}
\DeclareMathOperator{\cov}{Cov}
\DeclareMathOperator{\diag}{diag}
\DeclareMathOperator{\sign}{sgn}
\DeclareMathOperator{\KL}{KL}
\let\hat\widehat
\let\tilde\widetilde
\newcommand\inp[2]{\left\langle #1,#2 \right\rangle}
\newcommand\norm[1]{\left\lVert #1 \right\rVert}
\newcommand{\ellp}{{\ell^\prime}}
\newcommand*\Let[2]{\State #1 $\gets$ #2}
\title{Guaranteed Private Communication \\ with Secret Block Structure}
\author{
    Maxime Ferreira Da Costa,
	 Jianxiu Li,
  and Urbashi Mitra
	 \thanks{M. Ferreira Da Costa is with the Laboratory of Signals and Systems (L2S) at CentraleSupélec, Université Paris--Saclay. The work of M. Ferreira~Da~Costa is supported in part by ANR-20-IDEES-0002. Email: \texttt{maxime.ferreira@centralesupelec.fr}. }
	 \thanks{J. Li and U. Mitra are with the Viterbi School of Engineering, University of Southern California. The work of J. Li and U. Mitra is supported in part by the USC + Amazon Center on Secure and Trusted Machine Learning, NSF CCF-1817200, DOE DE-SC0021417, Swedish Research Council 2018-04359, NSF CCF-2008927, NSF CCF-2200221, ONR 503400-78050, and ONR N00014-15-1-2550. Emails: \texttt{jianxiul@usc.edu}, \texttt{ubli@usc.edu.}}
}
\begin{document}

\maketitle

\begin{abstract}
A novel private communication framework is proposed where privacy is induced by transmitting over a channel instances of linear inverse problems that are identifiable to the legitimate receiver but unidentifiable to an eavesdropper. The gap in identifiability is created in the framework by leveraging secret knowledge between the transmitter and the legitimate receiver. Specifically, the case where the legitimate receiver harnesses a secret block structure to decode a transmitted block-sparse message from underdetermined linear measurements in conditions where classical compressed sensing would provably fail is examined. The applicability of the proposed scheme to practical multiple-access wireless communication systems is discussed. The protocol's privacy is studied under a single transmission, and under multiple transmissions without refreshing the secret block structure. It is shown that, under a specific scaling of the channel dimensions and transmission parameters, the eavesdropper can attempt to overhear the block structure from the fourth-order moments of the channel output. Computation of a statistical lower bound suggests that the proposed fourth-order moment secret block estimation strategy is near optimal. The performance of a spectral clustering algorithm is studied to that end, defining scaling laws on the lifespan of the secret key before the communication is compromised. Finally, numerical experiments corroborating the theoretical findings are conducted.
\end{abstract}

\begin{IEEEkeywords}
Private communication, inverse problems, structured compressed sensing, moment method.
\end{IEEEkeywords}

\section{Introduction}

\IEEEPARstart{W}{hile}  communication privacy is often ensured at higher network layers~\cite{Yu, Tomasin, Schmitt}, and can be achieved via cryptographic means; there are new methods in
\emph{physical layer security}~\cite{bloch2011physical}, which can leverage the structural properties of a communication channel to generate privacy. Physical layer privacy can
strengthen security in modern data exchange protocols, such as next-generation wireless systems, the Internet of Things, and satellite constellations. Physical layer security offers numerous complementary guarantees to usual cryptography: It can protect users' identities, physical locations, or even conceal the existence of a communication to an eavesdropper; and can be implemented opportunistically over wireless channels with no or little computational overhead.  There is interest in realizing the theoretical promises of physical layer security in realistic systems~\cite{poor2017wireless}.

Traditional physical layer privacy schemes exploit channel differences to share information with Bob without Eve's knowledge, which often comes with the assumption that Bob and Eve's channels are distinct. Typical strategies involve the use of artificial noise~\cite{goel2008GuaranteeingSecrecy,tomasin2022BeamformingArtificial,rajiv2022securing,krunz2023secure}. The noise can be either injected into the nullspace of channel state information (CSI) and mitigated by exploiting CSI or directly injected noise into the transmitted message and resolved by the legitimate receiver side by exploiting a secret key~\cite{zhang2018CovertCommunication,schaefer2018SecureBroadcasting}. Other privacy schemes involve random and adversarial beamforming design~\cite{ayyalasomayajula2023users,Checa}, or the injection of fake paths over geometric channels to diminish the capability of an eavesdropper to distinguish between true and fake paths and challenge the estimation of CSI~\cite{li2023ChannelState,tran2024physical} by an eavesdropper.

The previously mentioned physical layer security schemes induce privacy by performing a linear action on the transmitted message that is statistically hard to invert without additional knowledge. In a related fashion, the compressed sensing framework~\cite{donoho2006compressed} assumes a non-linear prior on the input message and has been exploited to ensure privacy~\cite{zhang2016review}. If the sensing matrix is kept secret to an eavesdropper, perfect secrecy can be guaranteed in the information-theoretic sense~\cite{liang2009information} under restrictive conditions~\cite{bianchi2015analysis}. Typical sensing matrices are functions of the CSI. The computational secrecy of this approach has also been investigated~\cite{orsdemir2008security,rachlin2008secrecy}, restricting Eve's ability to recover the encoded message via a polynomial time algorithm.

Motivated by applications to multiple access wireless systems, we focus here, instead, on a novel model where the sensing matrix (\emph{e.g.} the channel matrix) is imposed by the environment and is \emph{not} under the control of the transmitter. Privacy is achieved by sharing an additional structure with the legitimate receiver, easing the decoding of the message~\cite{baraniuk2010model}. From the eavesdropper's perspective, the decoding amounts to solving a bilinear inverse problem, which is known to demand much more stringent assumptions to be identifiable~\cite{choudhary2018properties,choudhary2013identifiability2,da2019self,li2016identifiability,lee2018fast,ahmed2013blind}. Thus, statistical hardness is exploited to provide privacy.

\subsection{Linear Inverse Problem Based Privacy}
We consider the classical secret communication problem with side information: A transmitter (Alice) wishes to privately transmit a vector $\bx \in \bbR^n$ to a legitimate receiver (Bob) over a public channel\footnote{For clarity purposes, signals are assumed to be real-valued, but the present model and analysis are extendable to the complex case.}. The noisy channel output received by Bob and the eavesdropper (Eve) are $\by_B = f_B(\bx) + \bw_B$ and $\by_E = f_E(\bx) + \bw_E$, with noise $\bw_B$ and $\bw_E$, respectively. To achieve privacy and prevent Eve from recovering the message $\bx$, Alice and Bob may communicate a \emph{low information rate signal} over a secure channel inaccessible to Eve. The secure channel assumption is common in physical layer security and has been previously used
with collaborative inference strategies~\cite{mohapatra2016CapacityTwoUser}, or in covert communication~\cite{zhang2021CovertCommunication}. This secure channel can be constructed, for example, through coding such as in the context of a wiretap channel with side information~\cite{oggier2011SecrecyCapacity,chen2008wiretap}.

\begin{figure}[t]
	\centering	\includegraphics[width=\columnwidth]{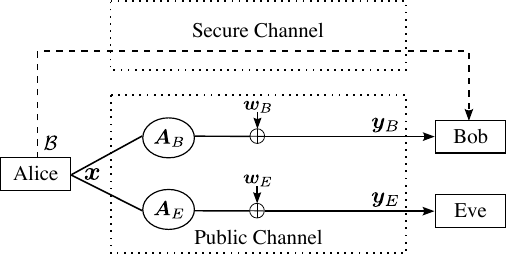}
	\caption{Communication model with secure channel.} \label{fig:model}
\end{figure}

In the proposed setting, the effect of the Alice--Bob and Alice--Eve channels are assumed to be linear and modeled by a ``fat'' matrices $\bA_{B} \in \bbR^{m \times n}$, and $\bA_{E} \in \bbR^{m \times n}$, respectively, with $m < n$ so that channel outputs $\by_B$ and $\by_E$ write
\begin{subequations}
    \begin{align}
    \by_B &= \bA_B \bx + \bw_B \\
    \by_E &= \bA_E \bx + \bw_E
\end{align}
\end{subequations}
where $\bw_B \sim \cN(\bm{0},\sigma^2_B \bI_{m})$ and $\bw_E \sim \cN(\bm{0},\sigma^2_E \bI_{m})$ are white Gaussian noise. The matrices $\bA_B$ and $\bA_E$ are imposed by the \textit{environment}; $\bA_B$ is known by Bob and $\bA_E$ is known by Eve. Finally, we assume that Eve is aware of the communication protocol established by Alice. The overall communication model is depicted in Figure~\ref{fig:model}.

For the purposes of our analysis, we will assume the channel matrices $\bA_B$ and $\bA_E$ to satisfy certain incoherence properties, which are detailed in the sequel. The privacy results in~Section~\ref{sec:eavesdroppingHighOrderMoments} are given in terms of incoherence and hold \emph{regardless of the specific realizations} of the Alice--Bob and Alice--Eve channels. For this reason, and to improve clarity, the subscripts ``B'' and ``E'' referring to Bob and Eve's model parameters are dropped in the rest of the paper unless a disambiguation is explicitly needed.

To ensure privacy, Alice, who designs the message $\bx$ and the side information, must ensure two properties. First, Bob must be able to provably recover $\bx$ from the observation $\by$ via the side information from the secure channel. Second, Eve cannot provably recover $\bx$ without knowing the side information. Thus, Alice is left to design an inverse problem that is \emph{identifiable} to Bob but \emph{unidentifiable} to Eve. These goals can typically be jointly achieved by imposing an additional structure on $\bx$ and privately sharing this structure over the secure channel. For practicality, this structure must be comprised of a small number of bits and reusable over multiple transmissions. Building onto our prior work~\cite{dacosta2022FrameworkPrivate}, we propose that Alice shares a secret block structure with Bob and encodes her message as a block-sparse signal whose support follows this secret structure. Harnessing a block-sparse prior to recovering signals through underdetermined linear measurements has been extensively shown to allow exact recovery in conditions where classical compressed sensing would provably fail~\cite{eldar2009robust,eldar2009block,baraniuk2010model, gribonval2003sparse,}. We leverage these results to establish the existence of a private communication regime where Alice and Bob achieve secrecy by transmitting single instances of an unidentifiable compressed sensing problem over a public channel. Then, as refreshing the secret block structure at each channel transmission is impractical, we study the privacy of the communication from multiple transmissions while reusing the same secret block structure. We propose a near-optimal method for Eve to eavesdrop on the block structure based on the spectral clustering of the fourth-order moments of the channel output. An upper bound on the number of transmissions before the secret structure and the messages are compromised is derived, and the trade-off between key reuse and secrecy is discussed. Spectral clustering has been considered a fast and robust method to recover low-dimensional structures in high-dimensional datasets with significant success. It has been applied, for instance, to recovering partitions and cliques in high-dimensional graphs~\cite{rohe2011SpectralClustering} as well as for unsupervised classification in machine learning~\cite{von2007tutorial}.

The proposed signaling scheme is motivated by its applicability to modern multi-user wireless communication protocols. As an example, we assume an uplink scenario with $r$ many transmitters sending within a symbol interval a message $\bu_q \in \bbR^{d}$ using a precoding scheme $\bm{S}_q \in \bbR^{n \times d}$ through a linear channel $\bm{H}_q \in \bbR^{n \times n}$ that is imposed by the environment. The received message at the base station classically reads $\by = \sum_{q=1}^r
 \bH_q \bS_q \bu_q  + \bw$. When the channel users parsimoniously transmit at a given symbol interval, that is, a random fraction of users remain inactive, the channel input can be modeled with a group-sparse prior. If this prior is only known by the legitimate base station (Bob), the relative identifiability of block-sparse signals versus unstructured sparse signals can be exploited to induce privacy against an eavesdropper.
 Many massive access communication schemes rely on sporadic channel traffic~\cite{wu2020massiveAccess} to allow more robust decoding on the receiver side, even from an under-determined channel output. We pinpoint two practical schemes where our framework is applicable:
\begin{enumerate}
    \item In \emph{overloaded CDMA communications}, the transmitters rely on unique sequences $\{\bS_1, \cdots, \bS_r \}$, known to the base station, to spread the messages onto a larger dimension space before transmission~\cite{verdu1999spectral,chen2001multicarrier}. Sparse coded multiple access schemes have been considered to improve user detection when the system is overloaded~\cite{alam2018NonOrthogonalMultiple}, and adapted coding sequences are proposed in~\cite{liu2020IdenticalCode,zhu2011ExploitingSparse}. However, the privacy benefits of overloading have not yet been considered in that context.
    \item In \emph{massive MIMO communications}, the number of identifiable spatial streams equals the number of receive antennas. Suppose the transmitter has more antennas than the receiver. In that case, she intermittently activates sub-groups of antennas according to a pattern shared with the receiver. She transmits on the active sub-groups at each symbol interval at the price of a reduced bit rate. Such MIMO systems have been considered to minimize implementation cost~\cite{ni2016HybridBlock} or improve spectral efficacy~\cite{wang2019non,liu2018gaussian}.
\end{enumerate}

\subsection{Contributions and Paper Organization}
We build upon our prior work~\cite{dacosta2022FrameworkPrivate} and present an improved eavesdropping scheme based on fourth moments with complete proofs and numerical simulations. Computation of a statistical lower bound suggests that the improved eavesdropping scheme is asymptotically near-optimal. In Section \ref{sec:secretBlockSparsity}, we propose a novel communication protocol that leverages the advantageous recoverability of block-sparse signals to ensure privacy. We provide Alice and Bob's encoding and decoding strategies, respectively. In our design, Alice transmits secretly to Bob a block structure and uses this structure to encode her message, which can be done at a very low transmission rate, while the channel matrix $\bA$  \emph{cannot} be designed by Alice and is provided by nature. To the authors' knowledge, the proposed protocol is the first linear inverse problem-based privacy method that does not require the matrix $\bA$ to be secretly shared. Additionally, unlike most pre-existing physical layer privacy designs, neither co-location nor distinct locations are needed for the proposed scheme. Furthermore, Corollary \ref{cor:singleSnapshotPrivacy} guarantees that Alice can adjust the block length and the sparsity level of the message she transmits so that the transmission is provably identifiable for Bob and unidentifiable to Eve as the signal length increases.
In Section \ref{sec:eavesdroppingHighOrderMoments}, we consider the possibility of Eve recovering the secret block structure from the observation of {\em multiple} snapshots of the observation $\{\by_\ell\}$ that Alice has generated with the same block structure $\cB$. We show in Proposition~\ref{prop:estimateOfB} that, depending on Alice's choice of the block length and sparsity level, it is possible to extract $\cB$ from the fourth-order moments of the observation and propose an eavesdropping algorithm to that end. We investigate the case of a finite number of snapshots and derive an upper bound on the rate at which Alice must generate a new $\cB$ to prevent Eve from deciphering Bob's messages.

We present numerical results that validate our theoretical findings in Section~\ref{sec:numericalSimulation}. Section~\ref{sec:conclusion} draws a conclusion, and further research directions are discussed.

\subsection{Notations}
Vectors of $\bbR^n$ and matrices of $\bbR^{n_1 \times n_2}$ are denoted by boldface letters $\ba$ and capital boldface letters $\bA$, respectively. The entry $(i,j)$ of a matrix $\bA$ is written as $a_{i,j}$. The matrix norms $\left\Vert \bM \right\Vert_2$, $\norm{\bM}_{\sF}$, and $\left\Vert \bM \right\Vert_{\mathrm{max}}$ refer to the spectral norm, the Frobenius norm, and the maximal absolute value of the entries in $\bM$, respectively. Given a positive semi-definite matrix $\bM$, we write $\lambda_{\min}\left(\bM\right)$ and $\lambda_{r}\left(\bM\right)$ as its smallest eigenvalue, and $r$th-largest eigenvalue (with multiplicity), respectively. The Hadamard product between two matrices $\bM_1$ and $\bM_2$ is denoted as $\bM_1 \odot \bM_2$. We write by $\bI_n$ the identity matrix and by $\bJ_n$ the all-one matrix in dimension $n\times n$. Given a random vector $\bz \in \bbR^{n}$, we denote by $\bSigma_{\bz} \in \bbR^{n \times n}$ its covariance matrix. A block structure over $\bbR^n$ into $r$ blocks is described by a mapping  $\cB : [1,\dots,n] \to [1,\dots,r]$, and is associated with the indicator matrix of $\bB \in \bbR^{n \times n}$ defined by
\begin{equation}\label{eq:definition_blockmatrix}
	b_{i,j} = \begin{cases}
		1 & \textrm{if } \cB(i) = \cB(j) ; \\
		0 & \textrm{if } \cB(i) \neq \cB(j).
		\end{cases}
\end{equation}
We denote by $\bx[q]$ the subvector of $\bx$ with entries $x_i$ ensuring $\cB(i) = q$.  The ``block-$\ell_0$-norm'' of a vector $\bx$ is defined as $\norm{\bx}_{\cB,0} = \sum_{q=1}^r \bm{1}_{\bx[q] \neq \bm{0}}$ and counts the number blocks in $\bx$ that are not exactly equal to $\bm{0}$. For two functions $f$ and $g$, we use the Landau notation $f = o(g)$ to denote that the ratio $\frac{f(t)}{g(t)}$ tends to $0$ as $t \to \infty$.

\section{Privacy with Block Sparsity}\label{sec:secretBlockSparsity}

\subsection{Alice's Encoding}\label{ssec:AliceEncoding}
In the proposed protocol, Alice constructs her message $\bx$ as follows. Given the knowledge of the channel dimension, Alice initializes the communication by randomly selecting a block structure $\cB : [1,\dots,n] \to [1,\dots,r]$. Alice sends this structure to Bob over the secret channel. We highlight that this exchange only requires $n \log_2(r)$ bits of information, which is significantly less than schemes relying on exchanging the entire matrix $\bA \in \mathbb{R}^{m \times n}$ ($mn$ infinite precision numbers). Although not required in practice, we assume for simplicity that the $r$ blocks have equal block size $d$, {\em i.e.} $n = r \cdot d$. Next, Alice selects a probability of block activation $p \in \left[0,\frac{1}{2} \right]$, where $p \leq \frac{1}{2}$ is assumed for convenience in the analysis, and encodes her message in a block-sparse vector $\bx$. In the sequel, we assume that $\bx$ is distributed according to a block Bernouilli--Gaussian distribution such that
\begin{equation}\label{eq:blockBernouilliGaussian}
	\bx[q] = \begin{cases}
		\bm{0}_d   & \textrm{w.p. } 1-p \\
		\bz[q] & \textrm{w.p. } p,
	\end{cases}
\end{equation}
where $\bz[q] \sim \cN(\bm{0}_d, \bI_d) $ is a random i.i.d. standard Gaussian vector of dimension $d$. A visualization of the block sparsity encoding is provided in Figure~\ref{fig:block}.
\begin{figure}[t]
	\centering
    \includegraphics[scale=0.95]{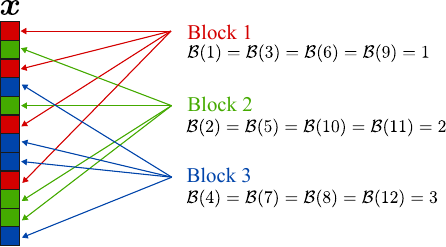}
	\caption{Example of block sparse encoding in dimension $n = 12$, with $r=3$ blocks of length $d=4$. }
	\label{fig:block}
\end{figure}

\subsection{Bob's Decoding} \label{ssec:bobDecoding}
At the public channel output, Bob receives a vector $\by = \bA \bx +\bw$ and leverages $\cB$ that Alice securely sent to recover the ground truth message $\bx$. To do so, Bob formulates the block-compressed sensing problem:
\begin{align}\label{eq:bob_BlockCS}
	\hat{\bx}_{B} = \argmin_{\bx\in\bbR^n} \norm{\bx}_{\cB,0} \textrm{ such that } \left\Vert \by - \bA \bx \right\Vert_2 \leq \epsilon,
\end{align}
where $\epsilon>0$ is a parameter that scales proportionally with the standard deviation of the noise $\left\Vert \bw \right\Vert_2$.
Harnessing a block-sparse prior in compressed sensing has been extensively shown in the literature to enhance the identifiability of \eqref{eq:bob_BlockCS} and to allow an exact reconstruction of the message with much fewer measurements than classical compressed sensing \cite{eldar2009robust, gribonval2003sparse}. However, directly solving \eqref{eq:bob_BlockCS} remains NP-hard in the general case due to the combinatorics inherent to the minimization of $\norm{\bx}_{\cB,0}$. Thus, Bob computes, instead, an estimate of $\widehat{\bx}_{B}$ using a polynomial time algorithm of his choice. Among the many addressed algorithms proposed in the literature, Block Matching Pursuit (Block MP) \cite{bach2008consistency}, Block Iterative Harding Thresholding (Block IHT), Block Basis Pursuit (Block BP) \cite{eldar2010block} or block-based CoSaMP \cite{baraniuk2010model}, have been shown to have provable performance guarantees.

In the sequel, we denote $\beta = \frac{m}{np}$ as the \emph{redundancy parameter}, defined as the ratio between the number of measurements at the channel output and the \emph{expected} number of non-zero entries in the block-sparse input vector $\bx$. We remark that $\beta \geq 1$ is trivially needed to decode the message successfully. Asymptotic phase transitions for the success of greedy algorithms to recover the block-sparse ground truth have been studied in the literature~\cite{baraniuk2010model}. Proposition~\ref{prop:successBob} reinterprets this result in terms of the parameter $\beta$, the block-length $d$, and the transmission parameter $p$ in the asymptotics $n \to \infty$.
\begin{proposition}[Success of Bob's decoding]\label{prop:successBob}
	Suppose that $\bA$ is a matrix with i.i.d. random Gaussian entries and assume a noise-free environment $\bw = \bm{0}$. If
	\begin{equation}\label{eq:BobSuccessRegime}
	    \log\left(\frac{1}{p}\right) = o\left( \frac{d}{\log(d)} \right)
	    \qquad  \text{and}  \qquad  \beta \to \infty
	\end{equation}
	in the limit where $n \to \infty$, then Bob can stably recover $\bx$ asymptotically almost surely.
\end{proposition}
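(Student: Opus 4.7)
My plan is to derive the claim from the block restricted isometry property (block-RIP) of Gaussian measurement matrices, combined with concentration of the block-sparsity level and known polynomial-time recovery guarantees for block-sparse signals. Although the realized block-sparsity $s = \norm{\bx}_{\cB,0}$ of the transmitted message is random, it concentrates sharply around its mean $r p = np/d$: a standard one-sided Chernoff bound gives $s \leq 2 r p$ with probability at least $1 - \exp(-c_0 np/d)$, which tends to $1$ since $np/d \geq m/d \to \infty$ (a consequence of $\beta \to \infty$ together with the natural constraint $m \leq n$).

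Conditioning on such a typical realization with $s \leq 2 r p$ active blocks, I would next invoke the model-based RIP result of Baraniuk \emph{et al.}, which guarantees that a Gaussian random matrix $\bA \in \bbR^{m \times n}$ satisfies the block-RIP of order $s$ with constant $\delta$ with probability at least $1 - e^{-c_1 m}$ as soon as
\begin{equation*}
	m \geq \frac{c_2}{\delta^2} \left( s d + s \log\left(\frac{e r}{s}\right) \right) = \frac{c_2}{\delta^2} \left( s d + s \log\left(\frac{e}{p}\right) \right).
\end{equation*}
Dividing through by $n p$ and using $s \leq 2 r p$, the sufficient condition becomes
\begin{equation*}
	\beta \geq \frac{2 c_2}{\delta^2} \left( 1 + \frac{\log(e/p)}{d} \right).
\end{equation*}
Under the hypotheses $\beta \to \infty$ and $\log(1/p) = o(d/\log d)$, the right-hand side stays bounded (in fact converges to $2 c_2/\delta^2$) while the left-hand side diverges, so for any fixed $\delta > 0$ the inequality is eventually satisfied as $n \to \infty$.

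Once the block-RIP holds with a constant below the threshold required for exact decoding, I would close the argument by appealing to the standard guarantees for block-sparse recovery via the polynomial-time algorithms cited by the authors---Block Basis Pursuit, Block CoSaMP, or Block IHT---each of which returns $\bx$ exactly in the noise-free regime under a sufficiently small block-RIP constant. A union bound over the three high-probability events (concentration of $s$ around $rp$, the block-RIP holding for $\bA$ conditional on $s$, and algorithmic success conditional on the RIP) yields recovery asymptotically almost surely. The main technical subtlety is matching the probabilistic block-Bernoulli--Gaussian input model of~\eqref{eq:blockBernouilliGaussian} with the deterministic block-sparse model used in the RIP literature; the specific rate $\log(1/p) = o(d/\log d)$ is precisely what is needed so that the $\log(1/\delta)$ overhead emerging from the covering argument at the heart of the model-based RIP proof does not compete with the dominant $sd$ term.
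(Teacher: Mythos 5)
The paper itself gives no proof of this proposition: it is presented as a reinterpretation of the phase-transition/sample-complexity results of Baraniuk \emph{et al.} on model-based compressive sensing, so your route --- block-RIP of a Gaussian matrix at the realized block-sparsity level $s$, plus standard exact-recovery guarantees for Block BP/CoSaMP/IHT under small block-RIP constant --- is exactly the argument the authors are gesturing at, and your normalization $m \geq c\,(sd + s\log(er/s))$ divided by $np$ correctly exposes why $\beta \to \infty$ and $\log(1/p) = o(d/\log d)$ suffice.

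There is, however, a genuine error in your concentration step. You claim $s \leq 2rp$ with probability $1 - \exp(-c_0\, np/d)$ and that this tends to $1$ because ``$np/d \geq m/d \to \infty$ as a consequence of $\beta \to \infty$.'' The inequality is backwards: $\beta = \tfrac{m}{np}$, so $np = m/\beta \leq m$, and $\beta \to \infty$ forces $np \ll m$, pushing the expected number of active blocks $rp = np/d = m/(\beta d)$ \emph{down}, not up. Under the stated hypotheses $rp$ need not diverge; it can remain bounded (take, say, $d \sim n/\log n$, $\beta \sim \log n$, $m \sim n/2$, giving $rp = O(1)$), and then $\bbP\{s > 2rp\}$ is bounded away from zero, so the event you condition on is not asymptotically almost sure. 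The fix is easy but changes the bound you should use: since $\rE[s] = rp$ and $\beta \to \infty$, Markov's inequality already gives $\bbP\{ s > \epsilon \beta r p\} \leq (\epsilon\beta)^{-1} \to 0$ for any fixed $\epsilon > 0$, and $s \leq \epsilon \beta rp$ yields $sd \leq \epsilon \beta np = \epsilon m$, which is what the block-RIP sample-complexity comparison actually needs (with the $s\log(er/s)$ term then controlled by $\log(1/p) = o(d/\log d)$ exactly as you argue). With that repair the proof goes through; as written, the step fails in the bounded-$rp$ regime.
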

Additionally, denoising bounds on the estimate of the input vector $\bx$ are provided in the presence of noise~\cite{baraniuk2010model}.

\subsection{Privacy Guarantees under a Single Snapshot}\label{ssec:privacySingleSnapshot}

If only one snapshot $\by$ is observed, it is impossible for Eve to reliably infer $\cB$, which remains ambiguous even with perfect knowledge of $\bx$. Therefore, from her perspective, the best possible approach consists of attempting to recover $\bx$ without leveraging the existence of a latent block structure in the message. This amounts to solving a \textit{classical} compressed sensing program
\begin{equation}\label{eq:eve_OptimalEstimator}
	\hat{\bx}_E = \argmin_{\bx\in\bbR^n} \norm{\bx}_{0} \textrm{ such that } \left\Vert \by - \bA \bx \right\Vert_2 \leq \epsilon.
\end{equation}
The identifiability condition $\bx = \hat{\bx}_E$ of \eqref{eq:eve_OptimalEstimator} is well-understood to be related to the Restricted Isometry Property (RIP) of the measurement operator \cite{candes2008restricted}. In the case of a Gaussian matrix $\bA$, the following proposition links the asymptotic failure of~\eqref{eq:eve_OptimalEstimator} to a function of the model's parameters, translating results in \cite{blanchard2011compressed} to our context.

\begin{proposition}[Failure of Eve's decoding \cite{blanchard2011compressed}]\label{prop:failureEve}
	Suppose that $\bA$ is a matrix with i.i.d. random Gaussian entries and assume a noise-free environment $\bw = \bm{0}$. Then if
	\begin{align}\label{eq:failureEveCondition}
		\beta ={}& o\left(\log \left(\frac{1}{p} \right) \right)
	\end{align}
	holds in the limit where $n \to \infty$, then the solution $\hat{\bx}_E$ of~\eqref{eq:eve_OptimalEstimator} is different from $\bx$ with overwhelming probability.
\end{proposition}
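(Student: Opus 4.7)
The plan is to invoke the weak phase transition for $\ell_0$ minimization with i.i.d.\ Gaussian measurement matrices established in \cite{blanchard2011compressed}, which provides a threshold curve $\rho_W(\delta)$ on the normalized sparsity $\rho = k/m$, depending on the undersampling ratio $\delta = m/n$, above which recovery of a typical sparse signal fails asymptotically almost surely. Given this tool, the argument amounts to translating the scaling assumption \eqref{eq:failureEveCondition} into the statement that the pair $(\delta, \rho)$ induced by our model lies strictly above $\rho_W$ as $n \to \infty$.

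First, I would fix the effective sparsity of the transmitted message. Since $\bx$ follows the block Bernoulli--Gaussian model \eqref{eq:blockBernouilliGaussian}, the number $K$ of active blocks is Binomial$(r, p)$ with $r = n/d$, and each active block contributes exactly $d$ non-zero coordinates almost surely (the Gaussian marginal vanishes with probability zero). A Chernoff bound on the binomial tail yields $k := \norm{\bx}_0 = Kd = np(1+o(1))$ with overwhelming probability, so $\rho = k/m = \frac{1}{\beta}(1+o(1))$ while $\delta = m/n = p\beta$.

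Second, I would use the asymptotic expansion of the weak phase transition in the low-undersampling regime, namely $\rho_W(\delta) = \Theta\!\left(1/\log(1/\delta)\right)$ as $\delta \to 0$. Under the hypothesis $\beta = o(\log(1/p))$, we have in particular $\beta = o(1/p)$ for $p$ small enough, hence $\delta = p\beta \to 0$. Moreover, $\log(1/\delta) = \log(1/p) - \log \beta \sim \log(1/p)$ since $\log \beta = o(\log \log(1/p)) = o(\log(1/p))$. Consequently $\rho_W(\delta) = \Theta(1/\log(1/p))$, and combining with $\rho = 1/\beta$ the failure condition \eqref{eq:failureEveCondition} is equivalent to $\rho \gg \rho_W(\delta)$. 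This places the model strictly inside the failure region of the phase transition and gives the claim.

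The principal obstacle is a regime-compatibility issue: the phase transition theorems of \cite{blanchard2011compressed} are typically phrased in a proportional growth asymptotic with $(\delta,\rho)$ held fixed, while here both tend to zero simultaneously. This is handled by exploiting the uniform asymptotic expansion of $\rho_W$ at the origin, which turns the failure condition into the explicit inequality $\beta \log(1/p)^{-1} \to 0$ captured by \eqref{eq:failureEveCondition}. A minor additional step is to replace the deterministic-support setting of the reference by our random support: conditioning on the realization of the support and exploiting the column-permutation invariance of the Gaussian distribution of $\bA$ reduces the random-support case to the deterministic one, after which the above comparison goes through verbatim.
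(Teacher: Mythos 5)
The paper contains no proof of this proposition: it is imported wholesale from \cite{blanchard2011compressed}, the only implicit content being the change of variables $\rho = k/m = \beta^{-1}$ and $\delta = m/n = p\beta$ that you reconstruct. Your translation is therefore exactly the one the authors intend, and the bookkeeping is correct: the Chernoff step giving $\norm{\bx}_0 = np(1+o(1))$ with overwhelming probability, the estimate $\log(1/\delta) = \log(1/p) - \log\beta \sim \log(1/p)$ (note that $\beta = o(\log(1/p))$ only gives $\log\beta = O(\log\log(1/p))$, not $o(\log\log(1/p))$ as you wrote, but the conclusion is unaffected), and the reduction of the random Bernoulli--Gaussian support to a deterministic one by exchangeability of the Gaussian columns are all fine.

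Two genuine gaps remain, the first of which is arguably a defect of the statement itself rather than of your argument. First, \eqref{eq:eve_OptimalEstimator} is literal $\ell_0$ minimization, and for a Gaussian $\bA$ the spark equals $m+1$ almost surely, so every $\bx$ with $\norm{\bx}_0 \leq m/2$ is the unique $\ell_0$ minimizer; in the regime of Corollary~\ref{cor:singleSnapshotPrivacy}, where $\beta \to \infty$, one has $k = \beta^{-1}m(1+o(1)) \ll m/2$ and literal $\ell_0$ recovery \emph{succeeds}. The curve $\rho_W$ you invoke is the weak phase transition for the $\ell_1$ relaxation (or for the greedy and RIP-certified algorithms analyzed in \cite{blanchard2011compressed}), which is what Eve actually runs in Section~\ref{sec:numericalSimulation}; your proof establishes failure of basis pursuit, not of \eqref{eq:eve_OptimalEstimator} as written, so you must state which transition curve you mean and for which decoder the failure is claimed. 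Second, the phase-transition theorems are proved in proportional growth with $(\delta,\rho)$ held fixed, whereas here $\delta = p\beta \to 0$ and $\rho = \beta^{-1} \to 0$ jointly; you correctly flag this, but you resolve it only by asserting a ``uniform asymptotic expansion of $\rho_W$ at the origin,'' which is precisely the statement that needs a citation or a proof (for instance, a non-asymptotic failure bound valid along sequences with $\rho_n/\rho_W(\delta_n)$ bounded away from and above $1$). Since that uniformity is the entire mathematical content of the proposition beyond the change of variables, the argument is not yet self-contained on this point.
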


Altogether, Propositions \ref{prop:successBob} and \ref{prop:failureEve} suggest that, given the dimensions $m$ and $n$ of $\bA$, Alice can select the parameters $\beta$ and $d$ so that \eqref{eq:BobSuccessRegime} and \eqref{eq:failureEveCondition} are jointly satisfied, which is summarised in the sequel,
\begin{figure}[t]
	\centering
	\includegraphics[width=0.9\columnwidth]{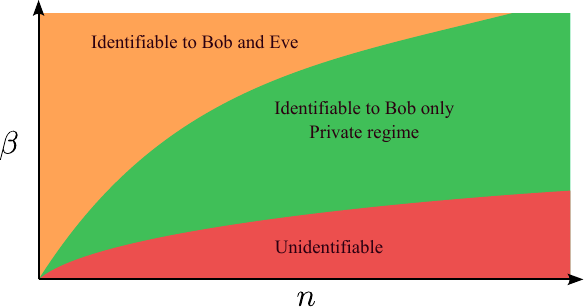}
	\caption{Regions of (non)identifiability for Eve and Bob in the single snapshot case for a block-length $d = n \log^{-\delta}(n)$ with $\delta > 0$.}
	\label{fig:sing_id}
\end{figure}

\begin{corollary}[Single snapshot privacy]\label{cor:singleSnapshotPrivacy} If Alice selects a diverging redundancy parameter $\beta \to \infty$ with $\beta = o\left(\log \left(\frac{1}{p} \right) \right)$ and $\log \left(\frac{1}{p} \right) = o \left(\frac{d}{\log(d)} \right)$, then the protocol is asymptotically private to the exchange of a single message in the limit $n \to +\infty$.
\end{corollary}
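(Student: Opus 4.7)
The plan is to derive the corollary as a direct consequence of the two propositions that immediately precede it, since the three hypotheses listed in the statement are precisely the union of the sufficient conditions appearing in Proposition~\ref{prop:successBob} (for Bob's successful recovery) and in Proposition~\ref{prop:failureEve} (for Eve's failure). Accordingly, I would first spell out what ``the protocol is private to the exchange of a single message'' means in this context: informally, (i) Bob stably recovers $\bx$ from $\by$ asymptotically almost surely via his block-compressed sensing program~\eqref{eq:bob_BlockCS}, and (ii) Eve's best unilateral decoding attempt, namely the classical compressed sensing program~\eqref{eq:eve_OptimalEstimator}, returns an estimate different from $\bx$ with overwhelming probability. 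Under the hypotheses of the corollary, these two properties should hold simultaneously, and ``privacy'' follows by a union bound over the corresponding failure events.

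The bulk of the argument then reduces to matching assumptions. The two assumptions $\log(1/p) = o(d/\log(d))$ and $\beta \to \infty$ are verbatim the sufficient conditions~\eqref{eq:BobSuccessRegime} in Proposition~\ref{prop:successBob}, so I invoke that proposition to conclude $\hat{\bx}_B = \bx$ asymptotically almost surely. Similarly, the assumption $\beta = o(\log(1/p))$ is exactly~\eqref{eq:failureEveCondition}, so Proposition~\ref{prop:failureEve} yields $\hat{\bx}_E \neq \bx$ with overwhelming probability. Combining the two conclusions gives the claimed single-snapshot privacy.

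The one nontrivial point is to check that the three conditions $\beta \to \infty$, $\beta = o(\log(1/p))$, and $\log(1/p) = o(d/\log(d))$ are mutually compatible, so that the corollary is not vacuous; this is the main (and essentially only) obstacle. I expect to handle it by exhibiting an explicit family of parameter scalings that jointly satisfies all three constraints. For instance, with $d = n \log^{-\delta}(n)$ for some $\delta > 0$ as in Figure~\ref{fig:sing_id}, one can pick any sequence $\log(1/p)$ diverging strictly slower than $d/\log(d)$ (e.g., $\log(1/p) = \sqrt{d/\log(d)}$) and then any $\beta$ diverging strictly slower than $\log(1/p)$. Since the three bounds form a strict hierarchy, diverging sequences interpolating between them exist for all $n$ large enough, so the regime described by the corollary is non-empty. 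Figure~\ref{fig:sing_id} then provides a visual confirmation that Bob's success region and Eve's failure region genuinely overlap in the asymptotic parameter plane, and the proof is complete.
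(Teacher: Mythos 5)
Your proposal is correct and follows essentially the same route as the paper: the corollary's hypotheses are exactly the union of the sufficient conditions of Propositions~\ref{prop:successBob} and~\ref{prop:failureEve}, and the paper likewise treats it as an immediate combination of the two, with non-vacuity illustrated by the explicit scaling $d \sim n\log^{-\delta}(n)$ and $\log(1/p)=\beta\log(n)$ (cf.\ Fig.~\ref{fig:sing_id}). Your additional check that the three conditions are mutually compatible is a sensible inclusion and matches the paper's own example in spirit.
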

As an example, we discuss the scaling law of the parameters when the number of observation $m$ is fixed while the channel input $n$ gets large, and assume Alice allows the block length to grow with the channel input $n$ at a rate $d \sim n \log^{-\delta}(n)$ for some $\delta > 0$.
In this setup, Proposition~\ref{prop:successBob} ensures the region $\log^{\gamma}(n) \ll \beta \ll n \log^{-\delta-1}(n)$ is identifiable to Bob for any $\gamma > 0$, while Proposition~\ref{prop:failureEve} indicates $\beta \ll \log(n)$ is non-identifiable to Eve. Hence, $\log^{\gamma}(n) \ll \beta \ll \log(n)$ is asymptotically private.
This result suggests that parameter intervals for the private regime are increasing with the channel length. This highlights that the proposed communication protocol benefits from larger channel dimensions. Larger channel dimensions can be realized in practice by selecting longer spreading sequences in CDMA systems or increasing the number of antennae in MIMO systems.
In practice, Alice wants to maximize the quantity of information transmitted to Bob in a single message by transmitting messages with a maximum number of non-zero entries while remaining in the private regime. In the above example, this is achieved by selecting $p \sim n^{-1} \log^{-\gamma}(n)$.

Figure~\ref{fig:singleSnapshotNoNoise} shows the success rate of Bob and Eve to recover $\bx$ via the Block-BP and BP algorithms, respectively, for different values of the ratio $\beta$.  We see that as $\beta$ gets small, the success rates for both Bob and Eve diminish. This is intuitive as $\beta = \frac{m}{pn}$ measures the number of observations relative to the number of active components.  The lower the activity level, the fewer non-zero signals that are sent.  However, it is also clear that given $m=200$, there is a sweet spot at $\beta = 3.3$, where Bob achieves good performance while Eve does not.

\begin{figure}[t]
	\centering	\includegraphics[width=0.97\columnwidth]{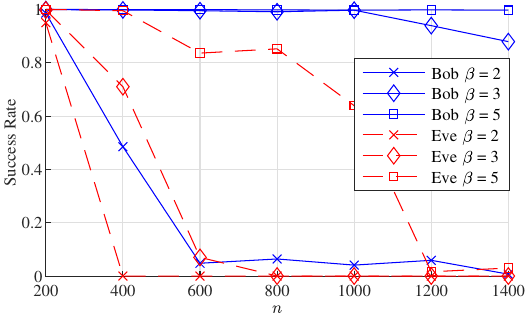}
	\caption{Success Rate of Bob and Eve to recover $\bx$ for different values of $\beta$ in the absence of noise. The parameters are set to $m = 200$, while $r$ is set to the divisor of $n$ closest to $\log_{10}^2(n)$ (\emph{i.e.} $r \simeq \log_{10}^2(n)$ and $d \simeq n \log_{10}^{-2}(n)$). The results are averaged over 1000 trials.}
	\label{fig:singleSnapshotNoNoise}
\end{figure}

\section{Eavesdropping via Higher Order Moments}\label{sec:eavesdroppingHighOrderMoments}

\subsection{Structure of the Moments}\label{sec:perfectCovariance}
In effect, our results above are for a single-pad key, \textit{i.e.} a new block structure is created for each message to be sent~\cite{bianchi2015analysis}.
To reduce the usage of the secure channel, we want to understand the reusability of $\cB$ in transmitting several independent signals $\{\bx_1, \dots, \bx_L \}$. In this scenario, if Eve can acquire multiple snapshots of observation $\{\by_1, \dots, \by_L \}$ given by $\by_\ell = \bA \bx_\ell+\bw_\ell$, $\ell = 1,\dots, L$ and under the knowledge of the prior distribution \eqref{eq:blockBernouilliGaussian} of $\bx$, she can attempt to gain statistical information about $\cB$ without having to reconstruct the messages by studying the posterior distribution of $\by$. When additional structure can be assumed for the block structure, such as contiguity of the blocks, intra-block correlation can be exploited in the block-sparse Bayesian learning framework (BSBL) to attempt to recover both the block structure and the transmitted messages simultaneously~\cite{zhang2013ExtensionSBL,fang2013PatternCoupledSparse}. In the absence of additional information on the block structure, Eve can study the moments of the posterior distribution of $
\by$. In particular, we observe that given our block signaling, the mean $\rE[\bx] = \bm{0}_N$ and covariance $\bSigma_{\bx} = p \bI_n$ of $\bx$ carry no information about the block structure $\cB$.  However, the even fourth-order moments of $\bx$ \textbf{do} provide information about the block structure, $\cB$, as seen below:
\begin{align}\label{eq:fourthOrderMomentOfX}
	\bSigma_{\bx \odot \bx}(i, j) &= \rE[x_i^2 x_j^2] -\rE[x_i^2]\rE[x_j^2] \nonumber \\
    &= \begin{cases}
		3p - p^2  & \text{if } i = j \\
		p - p^2  & \text{if } \cB(i) = \cB(j) \text{ and } i \neq j \\
		0  & \text{if } \cB(i) \neq \cB(j),
	\end{cases}
\end{align}
Additionally, as the odd fourth-order moments of $\bx$ equal zero, the terms in~\eqref{eq:fourthOrderMomentOfX} are the moments of smallest order containing information about the block structure $\cB$. As the number of samples that is necessary to estimate moments increases with their order, Eve can restrict herself to the study of the covariance $\bSigma_\bz$ of the vector $\bz = \left( \bA^{\sT} \by \right) \odot \left( \bA^{\sT} \by \right)$ in attempt to eavesdrop $\cB$ from the observation of the channel output. Given this observation, understanding the reusability of the block structure is equivalent to understanding Eve's capability to learn these fourth moments.

For notational convenience, let $\bM = \bA^\sT \bA$ and $\bP = \bM \odot \bM$. Moreover, we define the matrices $\bE_\cB$, $\bF$, and $\bG$ where each component is given, respectively, by,
\begin{subequations}\label{eq:matrixVariables-def}
	\begin{align}
		\bE_\cB(i,j) ={}& \sum_{k} \sum_{\substack{{k^\prime} \neq k \\ \cB({k^\prime}) = \cB(k)}} m_{i,k} m_{i,{k^\prime}} m_{j,k} m_{j,{k^\prime}}, \\
		\bF(i,j) ={}& \sum_{k} \sum_{{k^\prime} \neq k} m_{i,k} m_{i,{k^\prime}} m_{j,k} m_{j,{k^\prime}}, \\
		\bG(i,j) ={}& \sum_{k} \sum_{{k^\prime} \neq k} a_{k,i} a_{k,j} a_{{k^\prime},i} a_{{k^\prime},j}.
\end{align}
\end{subequations}
The next proposition, whose proof is presented in Appendix~\ref{sec:proof_covariance}, gives an expression of the covariance $\bSigma_\bv$ as a function of the matrices $\bE_\cB$, $\bF$, and of the block structure matrix $\bB$.

\begin{proposition} \label{prop:covariance} Let $\bz = \left( \bA^{\sT} \by \right) \odot \left( \bA^{\sT} \by \right)$. If $\bx$ is drawn according to \eqref{eq:blockBernouilliGaussian} then the covariance $\bSigma_\bz$ of $\bz$ is given by
	\begin{equation} \label{eq:covariance}
		\bSigma_\bz = p(1-p) \bP \bB \bP + 2p \bE_{\cB} + \bC
	\end{equation}
	where the matrix $\bC$ is given by
	\begin{multline}\label{eq:C-expression}
		\bC = 2p  \bP^2 +  2 p^2 \bF
		+ 2 p \sigma^2 \bm{M}^2 \odot \bM \\
	 +2\sigma^4\left( \left(\bA \odot \bA\right)^{\sT} \left( \bA \odot \bA \right) + \bG \right)
	\end{multline}
\end{proposition}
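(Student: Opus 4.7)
The plan is to isolate signal and noise. Setting $\bu = \bA^\sT \by = \bv + \bn$ with $\bv = \bM \bx$ and $\bn = \bA^\sT \bw \sim \cN(\bm{0}, \sigma^2 \bM)$ independent of $\bv$, every covariance entry reads $\bSigma_\bz(i,j) = \rE[u_i^2 u_j^2] - \rE[u_i^2]\rE[u_j^2]$. Expanding each $u_s^2 = v_s^2 + 2 v_s n_s + n_s^2$ and using independence of $\bv$ from $\bn$ together with $\rE[\bn]=\bm{0}$, every odd monomial in the noise vanishes. After cancellation against $\rE[u_i^2]\rE[u_j^2]$, the covariance splits into three parts: a pure signal term $\bSigma_{\bv \odot \bv}(i,j)$, a cross term proportional to $\rE[v_iv_j]\,\rE[n_in_j]$, and a pure noise term which by Isserlis equals $2(\rE[n_in_j])^2 = 2\sigma^4 m_{i,j}^2$.

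The noise-dependent terms are quick. With $\rE[v_iv_j]=p[\bM^2]_{i,j}$ and $\rE[n_in_j]=\sigma^2 m_{i,j}$, the cross term reduces to a scalar multiple of $[\bM \odot \bM^2]_{i,j}$. For the purely quartic noise term, writing $m_{i,j}^2 = \bigl(\sum_k a_{k,i}a_{k,j}\bigr)^2 = \sum_{k,k'}a_{k,i}a_{k,j}a_{k',i}a_{k',j}$ and separating the diagonal $k=k'$ from the off-diagonal $k\neq k'$ reproduces exactly $(\bA \odot \bA)^\sT(\bA \odot \bA) + \bG$, giving the last bracket of~\eqref{eq:C-expression}.

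The core step is $\bSigma_{\bv \odot \bv}$. I would represent the block signal as $x_k = \xi_{\cB(k)} g_k$ with $\xi_q \sim \mathrm{Ber}(p)$ and $g_k \sim \cN(0,1)$, all mutually independent, so that the fourth moment factorises as $\rE[\prod_s x_{k_s}] = \rE[\prod_s \xi_{\cB(k_s)}] \cdot \rE[\prod_s g_{k_s}]$. Isserlis forces the Gaussian factor to vanish unless the four indices pair up, leaving four configurations: the diagonal $k_1=k_2=k_3=k_4$ (which feeds $3p[\bP^2]_{i,j}$) together with the three off-diagonal size-two pairings $(k_1=k_2,k_3=k_4)$, $(k_1=k_3,k_2=k_4)$, $(k_1=k_4,k_2=k_3)$. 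Using $\xi^2=\xi$, the Bernoulli factor reduces to $p$ when the two distinct pair-indices share a block and to $p^2$ otherwise, which is the structural source of the dependence on $\bB$.

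Finally I would reassemble each off-diagonal sum as a same-block part plus a complement. The first pairing contracts the outer indices against $m_{i,a}^2, m_{j,b}^2$, so the same-block part assembles into $[\bP \bB \bP]_{i,j} - [\bP^2]_{i,j}$ and the complement into $\tau_i\tau_j - [\bP \bB \bP]_{i,j}$, where $\tau_i = [\bM^2]_{i,i}$; the two symmetric pairings contract against $m_{i,a}m_{i,b}m_{j,a}m_{j,b}$, producing $\bE_\cB(i,j)$ and $\bF(i,j) - \bE_\cB(i,j)$ respectively from~\eqref{eq:matrixVariables-def}. Subtracting $\rE[v_i^2]\rE[v_j^2]=p^2\tau_i\tau_j$ cancels exactly the rank-one tensor $\tau_i\tau_j$ arising from the first pairing, and the remaining combination yields the $p(1-p)\bP\bB\bP$, $\bE_\cB$, $2p\bP^2$, and $2p^2\bF$ contributions in~\eqref{eq:covariance}. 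The main obstacle is the bookkeeping that shepherds the four index configurations into the four matrix terms; the guiding identity is that $\bB$ precisely encodes the discrepancy between block-constrained and unconstrained double sums over $(k,k')$, so that every same-block versus different-block split feeds cleanly into either $\bE_\cB$, $\bP\bB\bP$, or their complements $\bF$ and the rank-one outer product of the row sums of $\bM^2$.
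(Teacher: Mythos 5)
Your route is the paper's route: split $\bA^\sT\by$ into the signal part $\bu=\bM\bx$ and the noise part $\tilde{\bw}=\bA^\sT\bw$, kill all cross covariances using independence and the vanishing odd moments, evaluate the signal quartic moments by pairing indices (the paper's direct expansion of $\rE[u_i^2u_j^2]$ is exactly your Isserlis-plus-$\xi g$ factorisation in different packaging; it folds your first pairing into $\bp_i^\sT\bSigma_{\bx\odot\bx}\bp_j$ with $\bSigma_{\bx\odot\bx}=2p\bI_n+p(1-p)\bB$, which automatically performs your $\tau_i\tau_j$ cancellation), reduce the cross term to a product of second moments, and split the quartic noise term into diagonal and off-diagonal parts to obtain $(\bA\odot\bA)^\sT(\bA\odot\bA)+\bG$. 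The same-block versus different-block splits feeding $\bP\bB\bP$, $\bE_\cB$ and $\bF$ appear verbatim in Appendix A.

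The one substantive caveat is that you leave the two delicate constants unverified, and carrying your own bookkeeping through does \emph{not} literally reproduce \eqref{eq:covariance}--\eqref{eq:C-expression} as stated. First, since $\bF$ in \eqref{eq:matrixVariables-def} sums over \emph{all} $k'\neq k$ (same-block pairs included), your two symmetric pairings contribute $p\bE_\cB+p^2(\bF-\bE_\cB)$ each, i.e. $2p(1-p)\bE_\cB+2p^2\bF$ in total rather than the $2p\bE_\cB+2p^2\bF$ of \eqref{eq:covariance}; the paper's step \eqref{eq:cov-2} silently identifies the different-block sum with $\bF$ itself. Second, the cross term is $\cov\bigl(2\,\bu\odot\tilde{\bw}\bigr)$, whose $(i,j)$ entry is $4\,\rE[u_iu_j]\rE[\tilde{w}_i\tilde{w}_j]=4p\sigma^2[\bM^2\odot\bM]_{i,j}$, whereas \eqref{eq:sigmaz-1} carries a factor $2$ and \eqref{eq:C-expression} states $2p\sigma^2\bM^2\odot\bM$. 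Your plan is sound and, executed carefully, is the correct derivation; just do not close it by asserting agreement with \eqref{eq:covariance} without pinning down these coefficients, because that is precisely where the discrepancies with the stated formula sit.
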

Proposition~\ref{prop:covariance} proposes an decomposition of the covariance matrix $\bSigma_{\bz}$ into two main terms:
\begin{enumerate}
	\item The term $p(1-p) \bP \bB \bP + 2p \bE_\cB$, which captures properties of the block structure $\cB$.
	\item The term $\bC$ which only depends on the block activation probability $p$, on the channel $\bA$, and on the noise power~$\sigma^2$.
\end{enumerate}
In the sequel, we propose a strategy by which to exploit this structure to learn $\cB$.

\subsection{Reconstruction via Spectral Clustering}

In this section, we propose a provable spectral clustering-based algorithm for Eve to infer the block structure $\cB$ from observing a finite number of snapshots $L$. In our setting, the block structure matrix $\bB$ that Eve aims to recover has a rank equal to the number of blocks $r$, which is assumed to be much smaller than the ambient signal dimension $n$. As a result, the reliability of spectral clustering can be anticipated for inferring the low-dimensional block structure.

We first review Algorithm~\ref{alg:eavesdroppingMomentMethod}. This is a straightforward algorithm that employs the matrix $\bY$, whose columns $\{\by_1, \dots, \by_L\}$ are sampled from the channel output, to determine an estimate $\hat{\bSigma}_\bz$ of the covariance matrix $\bSigma_\bz$ in Equation~\eqref{eq:covariance}. This equation is consecutively ``inverted'', yielding an estimator $\tilde{\bB}$ of the indicator matrix $\bB$. As the $r$-leading eigenvectors of the indicator matrix $\bB$ identify exactly the block structure $\cB$, an estimate $\hat{\cB}$ of the true block structure $\cB$ is constructed by clustering the rows of the $r$ leading eigenvectors of the matrix $\tilde{\bB}$, following a $K$-means-type procedure described by Algorithm~\ref{alg:SpectralClustering}.

The rest of this section is dedicated to the theoretical analysis of the estimation procedure proposed by Algorithm~\ref{alg:eavesdroppingMomentMethod}. Under incoherence assumptions on the channel matrix $\bA$, we first assess Eve's capability to eavesdrop $\cB$ using Algorithm~\ref{alg:SpectralClustering} when she has access to infinitely many channel outputs ${\by_\ell}$, and thus to the ground truth covariance matrix $\bSigma_\bz$. Then, we consider the case where Eve observes a finite number of channel outputs.

\begin{algorithm}[t]  \caption{Eavesdropping by Spectral Clustering
}\label{alg:eavesdroppingMomentMethod}
  \begin{algorithmic}[1]
    \Function{MomentMethod}{\mbox{$\bY \hspace{-0.05in}\in \bbR^{m\times L},\bA \in \bbR^{m\times n},p,r$}}
      \Let{$\bZ$}{$\left(\bA^\sT  \bY\right) \odot \left(\bA^\sT  \bY\right)$}
      \Let{$\overline{\bz}$}{$  p \diag \left({(\bA^\sT \bA)}^2 \right) + \sigma^2 \diag \left(\bA^{\sT} \bA \right) $}
      \Let{$\gamma$} {$\frac{2(d-1)}{m} + \frac{(n-2)(d-1)}{m^2}$ with $d = \frac{n}{r}$}
			\Let{$\hat{\bSigma}_\bz$} {$\frac{1}{L} \sum_{\ell=1}^L \left(\bz_\ell - \overline{\bz} \right) \left(\bz_\ell - \overline{\bz} \right)^{\sT}$}
			\Let {$\bK$} {$\bC + 2p\gamma \bI_n$} \Comment{With $\bC$ as in \eqref{eq:C-expression}}
			\Let{$\tilde{\bB}$} {$\left(p(1-p)\right)^{-1} {\bP^{-1}} (\hat{\bSigma}_\bz  - \bK) \bP^{-1}$}
			\Let {$\tilde{\bU}$} {the $r$ dominant eigenvectors of $\tilde{\bB}$}
			\Let{$\hat{\cB}$} {{\sc GreedyKMeans} $(\tilde{\bU},r)$}
      \State \Return{$\hat{\cB}$}
    \EndFunction
  \end{algorithmic}
\end{algorithm}

\begin{algorithm}
  \caption{Greedy implementation of K-means}
    \label{alg:SpectralClustering}
  \begin{algorithmic}[1]
    \Function{GreedyKMeans}{$\tilde{\bU}\in \bbR^{n\times r}$}
            \Let{$\tilde{r}$}{$0$}
			\For{$j = 1 \dots n$}
				\If{$\min_{q \in \{1,\dots,\tilde{r}\}} \norm{\tilde{\bc}_q - \tilde{\bu}_j}_2 < \frac{1}{\sqrt{2d}}$}
				\Let{$\hat{\cB}(j)$}{$\argmin_q \norm{\tilde{\bc}_q - \tilde{\bu}_j}_2$}\\
                \hfill \Comment{ Assign $j$th entry to cluster with closest centroid}
				\Let{$\tilde{\bc}_q$}
                {$\mbox{mean}\left\{\tilde{\bu}_q; \; q \leq j \textrm{ and } \hat{\cB}(q) = \hat{\cB}(j) \right\}$} \\
                \hfill \Comment{Update the centroid}
				\Else
				\Let{$\tilde{r}$}{$\tilde{r}+1$} \Comment{Create a new cluster}
				\Let{$\tilde{\bc}_{\tilde{r}}$}{$\tilde{\bu}_j$} \Comment{Assign $j$th entry to new cluster}
				\EndIf
			\EndFor
            \State \Return{$\hat{\cB}$}
    \EndFunction
  \end{algorithmic}
\end{algorithm}

\subsection{Conditions for Exact Clustering}
Eve's ability to estimate $\tilde{\bB}$ sufficiently close to the indicator matrix $\bB$ is a determining factor in her attempt to recover $\cB$. When the spectral distance $\left\Vert \tilde{\bB} - \bB \right\Vert_2$ is small enough, the eigenvectors of $\tilde{\bB}$ will align with those of $\bB$, and the block structure will become identifiable by spectral clustering.
We start the theoretical derivations by finding in Proposition~\ref{prop:clustering_errors}, a sufficient condition on $\left\Vert \tilde{\bB} - \bB \right\Vert_2$ under which the K-means clustering procedure described by Algorithm~\ref{alg:SpectralClustering} returns the exactly the secret block structure $\cB$.

\begin{proposition}[Exact clustering] \label{prop:clustering_errors} Assume $\bB \in {\{0,1\}}^{n \times n}$ is the indicator matrix of a block structure $\cB$ with  $d \geq 2$. Then, for any $\tilde{\bB} \in \bbR^{n\times n}$ with $\norm{  \tilde{\bB} - \bB }_2 < \frac{\sqrt{2d}}{8}$, the output of Algorithm \ref{alg:eavesdroppingMomentMethod} applied the matrix $\tilde{\bU}\in \bbR^{n\times r}$ that is composed by the $r$ leading eigenvector of $\tilde{\bB}$ exactly recovers the block structure, {i.e.} $\hat{\cB} = \cB$.
\end{proposition}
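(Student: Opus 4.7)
The plan is to decompose the proof into three stages: (i) characterize the spectral structure of the clean indicator matrix $\bB$, (ii) control the row-wise deviation of $\tilde{\bU}$ from a suitably rotated copy of the leading eigenvectors of $\bB$ via a Davis–Kahan argument, and (iii) run an induction on the greedy K-means loop of Algorithm~\ref{alg:SpectralClustering} to certify exact clustering. Stage (i) is immediate: $\bB$ is block-diagonal up to permutation with $r$ identical $\bJ_d$ blocks, hence admits $r$ nonzero eigenvalues all equal to $d$ and an $(r+1)$-th eigenvalue of $0$, giving an eigengap $\delta = d$. One can choose the natural orthonormal basis of the top eigenspace so that the matrix $\bU \in \bbR^{n \times r}$ of leading eigenvectors has rows $\bu_i^\sT = \frac{1}{\sqrt{d}} \be_{\cB(i)}^\sT$, where $\be_q$ denotes the $q$-th canonical basis vector of $\bbR^r$; consequently, rows within the same block coincide and rows across distinct blocks lie at Euclidean distance exactly $\sqrt{2/d}$.

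For Stage (ii), I would invoke the Davis–Kahan $\sin\Theta$ theorem together with the standard orthogonal-Procrustes refinement to produce an orthogonal matrix $\bO \in \bbR^{r \times r}$ satisfying
$$
\norm{\tilde{\bU} - \bU\bO}_2 \;\leq\; \frac{\sqrt{2}\,\norm{\tilde{\bB} - \bB}_2}{d - \norm{\tilde{\bB} - \bB}_2} \;<\; \frac{1}{2\sqrt{2d}},
$$
the last inequality following from the hypothesis $\norm{\tilde{\bB} - \bB}_2 < \sqrt{2d}/8$ together with $d \geq 2$ through a direct algebraic verification. Since the spectral norm of a matrix dominates each of its row norms, this yields the uniform row-wise bound $\norm{\tilde{\bu}_i - (\bU\bO)_i}_2 < \frac{1}{2\sqrt{2d}}$ for every $i$. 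Because orthogonal transformations preserve pairwise row distances, the triangle inequality then gives $\norm{\tilde{\bu}_i - \tilde{\bu}_j}_2 < \frac{1}{\sqrt{2d}}$ whenever $\cB(i) = \cB(j)$ and $\norm{\tilde{\bu}_i - \tilde{\bu}_j}_2 > \frac{1}{\sqrt{2d}}$ otherwise.

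Stage (iii) is an induction on the loop index $j$, maintaining the invariant that every centroid $\tilde{\bc}_q$ produced so far lies within $\frac{1}{2\sqrt{2d}}$ of the unique ideal vector associated with its cluster, a property preserved by the centroid updates because the arithmetic mean of vectors within a fixed radius of a common point stays within that radius. Feeding the invariant into the algorithmic test $\min_q \norm{\tilde{\bc}_q - \tilde{\bu}_j}_2 < \frac{1}{\sqrt{2d}}$ via the Stage-(ii) estimates yields two exhaustive cases: either $\cB(j)$ has already been encountered and the distance to the matching centroid is strictly below threshold, so $j$ is correctly assigned; or $\cB(j)$ is new and all existing centroids sit strictly above threshold, so a new cluster is opened. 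This delivers $\hat{\cB} = \cB$ up to relabeling of cluster indices. The main hurdle I anticipate is the tight constant accounting in Stage (ii): the algorithmic threshold $\frac{1}{\sqrt{2d}}$ and the block separation $\sqrt{2/d}$ are both fixed, leaving only a slack window of order $\frac{1}{2\sqrt{2d}}$ for the Davis–Kahan perturbation to inhabit, and the hypothesis $\norm{\tilde{\bB} - \bB}_2 < \sqrt{2d}/8$ is tuned precisely to fit inside this window after the $\sqrt{2}$ factor incurred by the Procrustes conversion.
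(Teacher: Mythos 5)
Your proposal is correct and follows essentially the same route as the paper's proof: the rank-$r$ spectral structure of $\bB$ with eigengap $d$, a Davis--Kahan/Procrustes bound yielding a row-wise deviation of $\frac{1}{2\sqrt{2d}}=\frac{\sqrt{2}}{4\sqrt{d}}$ from the rotated ideal centroids, and an induction over the greedy K-means loop using the invariant that running centroids stay within that same radius, so that the $\frac{1}{\sqrt{2d}}$ threshold separates the correct cluster from all others. The only cosmetic differences are your use of the spectral rather than Frobenius norm for the eigenvector perturbation and a direct algebraic verification in place of the paper's $\frac{t}{1-t}\leq\sqrt{2}\,t$ shortcut.
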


\begin{proof}
	First, it is easy to confirm from Equation~\eqref{eq:definition_blockmatrix} that $\norm{\bB}_2 =d$. As both $\bB$ and $\tilde{\bB}$ are Hermitian matrices, they have orthogonal bases of eigenvectors. We write $\bU,\tilde{\bU} \in \bbR^{n \times r}$ the matrices whose columns are the eigenvectors corresponding to the $r$ leading eigenvalues of $\bB$ and $\tilde{\bB}$, respectively. By the Davis-Kahan eigenvector perturbation theorem~\cite{davis1970RotationEigenvectors}, there exists an orthogonal matrix $\bO \in \bbR^{r \times r}$ such that
\begin{align} \label{eq:boundDeviation-2}
	\norm{\tilde{\bU} - \bU\bO}_{\sF} \leq& \frac{\sqrt{2}\norm{\tilde{\bB} - \bB}_2}{\lambda_{r}(\bB) - \norm{\tilde{\bB} - \bB}_2 } \nonumber \\
	={}& \frac{\sqrt{2}\norm{\tilde{\bB} - \bB}_2}{d - \norm{\tilde{\bB} - \bB}_2 } \nonumber \\
	\leq{}& \frac{2}{d} \norm{\tilde{\bB} - \bB}_2 < \frac{\sqrt{2}}{4\sqrt{d}},
\end{align}
where we used $\frac{t}{1-t} \leq \sqrt{2}t$ when $0\leq t \leq \frac{1}{4}$ in the second inequality.
Next, we denote by $\bu_j$ and $\tilde{\bu}_j$ the $j$th columns of the matrices $\bU^{\sT}$ and $\tilde{\bU}^{\sT}$, respectively. From the expression~\eqref{eq:definition_blockmatrix} of $\bB$, the vector $\bu_j$ indicates the block in which the $j$th element belongs, more precisely we have
\begin{equation}
	\bu_j(q) = \begin{cases}
		\frac{1}{\sqrt{d}} & \text{if } \cB(j) = q \\
		0 & \text{otherwise}.
	\end{cases}
\end{equation}
Suppose that $\cB(j)=q$ and let  $\bc_q = \bc_{\cB(j)} = \bO^{\sT} \bu_j$, which represent the rotated true centroid of the $q$th block. Equation~\eqref{eq:boundDeviation-2} implies that $\norm{\tilde{\bu}_j - \bc_q}_2 < \frac{\sqrt{2}}{4 \sqrt{d}}$. Therefore, this also implies that the estimated centroid of the $q$th block $\tilde{\bc}_q$ satisfies $\norm{\tilde{\bc}_q - \bc_q}_2 < \frac{\sqrt{2}}{4\sqrt{d}}$ at each step of the algorithm.
From the triangle inequality, we have,
\begin{equation}\label{eq:distance_true_cluster}
    \norm{\tilde{\bu}_j - \bc_q}_2 \leq \norm{\tilde{\bu}_j - \tilde{\bc}_q}_2 + \norm{\tilde{\bc}_q - \bc_q}_2 < \frac{1}{\sqrt{2d}}.
\end{equation}
By orthogonality of the eigenvectors $\bu_{q}$ and $\bu_{q^\prime}$, we also have that $\norm{\bc_{q^\prime} - \bc_{q}}_2 = \norm{\bm{O}^{\sT} \left( \bu_{q^\prime} - \bu_{q}\right)}_2 = \sqrt{\frac{2}{d}}$ for any $q \neq q^\prime$. Hence if $q^\prime \neq \cB(j)$ we may write
\begin{align}\label{eq:distance_wrong_cluster}
	\norm{\tilde{\bu}_j - \tilde{\bc}_{q^\prime}}_2 & = \norm{\tilde{\bu}_j - \bc_q + \bc_q - \tilde{\bc}_{q^\prime}}_2 \nonumber  \\
    &\geq \norm{\bc_q - \tilde{\bc}_{q^\prime}}_2 - \norm{\tilde{\bu}_j - \bc_q}_2  \nonumber \\
    &> \sqrt{\frac{2}{d}} - \frac{\sqrt{2}}{2\sqrt{d}} = \frac{1}{\sqrt{2d}}.
\end{align}
Hence $\norm{\tilde{\bu}_j - \tilde{\bc}_{q}}_2 < \norm{\tilde{\bu}_j - \tilde{\bc}_{q^\prime}}_2$ for any $q \neq q^\prime$, and we conclude with~\eqref{eq:distance_true_cluster} and~\eqref{eq:distance_wrong_cluster} that at the $j$th iteration, Algorithm~\ref{alg:SpectralClustering} associate $\hat{\cB}(j) = \cB(j) =q$ if there was an element in $\{1,\dots,j-1\}$ that is in the $q$th cluster, otherwise associate $j$ to a new cluster $q$. This results in $\hat{\cB} = \cB$ at the algorithm's output.\end{proof}

\subsection{Asymptotic Vulnerability}\label{ssec:asymptotic_vulnerability}

In this subsection, we assume that Eve can sample infinitely many channel output $\{\by_\ell\}$ that have been produced with \emph{the same} secret block structure $\cB$, and wish to understand Eve's capability to recover $\cB$ from Algorithm~\ref{alg:eavesdroppingMomentMethod}. Of particular interest, Eve knows in this setting the probability distribution $\by$ and consequently has access to the ground truth covariance matrix $\bSigma_\bz$ given in~\eqref{eq:covariance}.
In the additional pessimistic hypothesis where Eve knows the activation probability\footnote{In more practical considerations, the transmission parameter $p$ can be estimated by Eve from the covariance $\bSigma_\by$ of the channel output as $\bSigma_\by = p \bM$.}
$p$, the block length $d$, the channel matrix $\bA$, and the statistics of the noise $\bw$, she can compute the matrices $\bP$ and $\bC$ in Proposition~\ref{prop:covariance}, and the constant $\gamma$ defined in the fourth step of Algorithm~\ref{alg:eavesdroppingMomentMethod}. Hence, she can formulate the estimate $\tilde{\bB}$ of the block structure $\bB$ as
\begin{align} \label{eq:trueCov-1}
	\tilde{\bB} ={}& \frac{1}{p(1-p)} \bP^{-1} \left( \bSigma_\bz - 2p \gamma \bI_n - \bC \right) \bP^{-1} \nonumber \\
	={}& \bB + 2 {(1-p)}^{-1} \bP^{-1} \left(\bE_{\cB} - \gamma \bI_n \right) \bP^{-1},
\end{align}
and achieves a spectral distance to the ground truth indicator matrix
\begin{align}\label{eq:eve_spectral_distance_infinity}
	\norm{\tilde{\bB} - \bB}_2 &=  2 {(1-p)}^{-1} \norm{\bP^{-1} \left( \bE_\cB - \gamma \bI_n \right) \bP^{-1}}_2.
\end{align}
The crux is to understand whenever~\eqref{eq:eve_spectral_distance_infinity} matches the sufficiency criteria of Proposition~\ref{prop:clustering_errors} to access Eve's perfect recovery $\cB$, and the vulnerability of the proposed scheme.

To that end, we must note that the matrices $\bE_\cB, \bF$, and $\bG$ introduced in \eqref{eq:matrixVariables-def} are summations of fourth-order moments of the matrices $\bM$ and $\bA$. Furthermore, even if the entries of the matrix $\bA$ are assumed to be drawn \emph{i.i.d.}, the products considered in \eqref{eq:matrixVariables-def} are coupled, and the summations are over dependent terms. As a result, additional statistical assumptions on the distribution of the channel matrix $\bA$ are needed to control the estimate of the block structure $\tilde{\bB}$. Therefore, we provide Definition \ref{def:coherence}, which introduces a new notion of coherence relevant to our spectral clustering context.
\begin{definition}[Coherence]\label{def:coherence}
	For an $m\times n$ matrix $\bA$, we let $\bM = \bA^\sT \bA$ and $\bP = \bM \odot \bM$. Given two positive numbers $\mu > 0$ and $\nu > 0$, a matrix $\bA$ is said to be $(\mu,\nu)$-\emph{coherent} if and only if the following bounds holds:
    \begin{enumerate}
    \item \underline{First-order bounds:}
    \begin{subequations}\label{eq:coherence-def}
	\begin{align}\label{eq:coherence-A}
		\left\Vert \bA \right\Vert_2 & \leq \sqrt{\frac{n}{m}}\mu \\
        \left\Vert \bA \right\Vert_{\max} & \leq \sqrt{\frac{n \log(n)}{m}}\mu \label{eq:coherence-A_max}
	\end{align}
    \item \underline{Second-order bound:}
    \begin{align}\label{eq:coherence-M}
        \norm{\bM}_{\max} & \leq \log(n) \mu^2
    \end{align}
    \item \underline{Fourth-order bounds:}
        For any block structure $\cB$ over $n$ element with maximal block length $d$, and for $(i,j) \in \{1,\dots,n\}^2$, the fourth order matrix $\bE_\cB$ satisfies
	\begin{equation}\label{eq:coherence-E}
		\norm{\bE_\cB - \gamma \bI_n}_2 \leq  \max\left\{\frac{1}{m^2}, \frac{n}{m^4}\right\} d\sqrt{n} \log(n)  \mu^8
	\end{equation}
	where $\gamma = \frac{2(d-1)}{m^2} + \frac{(n-2)(d-1)}{m^4}$, and the fourth order matrices $\bF$ and $\bG$ satisfy
	\begin{align}\label{eq:coherence-F}
		\norm{\bF}_2 \leq{}& \frac{n^2}{m^2} \log^{2}(n) \mu^8\\
		\norm{\bG}_2 \leq{}& \frac{n}{m} \log(n) \mu^4;\label{eq:coherence-G}
	\end{align}
    The matrix $\bP$ is invertible and
    \begin{equation}\label{eq:coherence-P}
        \lambda_{\min}(\bP) \geq \nu^{-1}.
    \end{equation}
\end{subequations}
\end{enumerate}
\end{definition}

The parameter $\mu$ is raised to different exponents in~\eqref{eq:coherence-def} to maintain homogeneity across the different matrix norms. Understanding when a matrix $\bA$ is $(\mu,\nu)$-coherent is crucial to apply our theoretical analysis of Algorithm~\ref{alg:eavesdroppingMomentMethod}. However, finding coherence parameters when assuming the entries $\{a_{i,j}\}$ of $\bA$ to be drawn \emph{i.i.d.} from a know prior distribution can be particularly challenging as the quantities defined in~\eqref{eq:matrixVariables-def} are summations of fourth and eighth-order terms in the matrix $\bA$. As a result, the terms in those summations are \emph{dependent}, and the usual incoherence bounds for matrix sensing~\cite{candes2006robust,davenport2016overview} cannot be directly applied.

Nonetheless, an interesting class of matrix $\bA$ to consider is the one whose columns are drawn \emph{i.i.d.} according to a \emph{unitary} and \emph{isotropic} distribution. In that case, we have
	\begin{equation}\label{eq:EP_expr}
		\rE[\bP] = \bI_n + \frac{1}{m} \left(\bJ_n - \bI_n \right).
	\end{equation}
Under the additional assumption that the columns of $\bA$ have a \emph{bounded inner product}, \emph{i.e.} if there exists a small enough $\varepsilon > 0$ such that
\begin{align}\label{eq:bounded_inner_product}
 p_{i,j} =\left\vert \left\langle \ba_i, \ba_j \right\rangle \right\vert^2 \leq \begin{cases}
    (1+\varepsilon) & \text{if } i=j \\
    \frac{1}{m}(1+\varepsilon) & \text{if } i\neq j
\end{cases}
\end{align}
for all $(i,j)$, then we can show that $(\mu,\nu)$-coherence holds with high probability. Indeed, \eqref{eq:coherence-A} holds because of the unitary isotropic assumption on $\bA$,~\eqref{eq:coherence-A_max} is induced by the bounded, hence sub-Gaussian concentration of the matrix $\bA$ (see \emph{e.g.}~\cite{tropp2015introduction}),~\eqref{eq:coherence-M} is immediate from~\eqref{eq:bounded_inner_product}, and \eqref{eq:coherence-P} occurs from $\norm{\bP - \rE[\bP]}_2 < \lambda_{\min}(\rE[\bP]) = 1 - m^{-1}$ given a small enough $\varepsilon$. Finally, Lemma~\ref{lem:concentrationEB} validates~\eqref{eq:coherence-E} with high probability, and its proof is provided in Appendix~\ref{sec:concentration_EB}. The proofs of the two later bounds~\eqref{eq:coherence-F} and~\eqref{eq:coherence-G} are omitted for brevity and can be re-derived by following analogous reasoning.
\begin{lemma}[Concentration of $\bE_\cB$]\label{lem:concentrationEB}
	Suppose that the columns of $\bA$ are drawn \emph{i.i.d.} according to a unitary isotropic random distribution and that~\eqref{eq:bounded_inner_product} holds for some $\varepsilon > 0$. There exists a constant $\mu>0$ such that~\eqref{eq:coherence-E} is satisfied with probability greater than $1-2n^{-1}$.
\end{lemma}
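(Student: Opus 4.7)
The plan is to bound $\norm{\bE_\cB - \gamma \bI_n}_2$ via the triangle inequality, splitting it into a bias term $\norm{\rE[\bE_\cB] - \gamma \bI_n}_2$ and a concentration term $\norm{\bE_\cB - \rE[\bE_\cB]}_2$, each controlled by a separate argument.

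First, I would compute $\rE[\bE_\cB]$ entrywise from the definition~\eqref{eq:matrixVariables-def} by partitioning the sum over ordered pairs $(k, k^\prime)$ according to the coincidences among the four indices $(i, j, k, k^\prime)$. Under the isotropy condition $\rE[\ba_k \ba_k^\sT] = \frac{1}{m}\bI_m$ implied by~\eqref{eq:EP_expr} and the higher-moment control granted by~\eqref{eq:bounded_inner_product}, each case reduces to a tractable product of small-order moments of $m_{i,j}$. The computation reveals that the diagonal of $\rE[\bE_\cB]$ equals $\gamma$, while the off-diagonal entries inherit the block pattern of $\bB$ at leading order together with a lower-order all-ones contribution. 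The spectral norm of the resulting bias matrix is then controlled via $\norm{\bB}_2 = d$ together with an elementary operator-norm bound on the $\bJ_n$ contribution.

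Second, for the concentration term, I would exploit the algebraic identity
\begin{equation*}
    \bE_\cB = \sum_{q=1}^r (\bM_q \bM_q^\sT)^{\odot 2} - \bP^2,
\end{equation*}
where $\bM_q \bM_q^\sT = \bA^\sT \bH_q \bA$ and $\bH_q = \bA_q \bA_q^\sT$ denotes the Gram matrix of the block-$q$ columns of $\bA$. Matrix Bernstein applied to the sum of independent rank-one matrices $\sum_{k \in q} \ba_k \ba_k^\sT$ yields $\norm{\bH_q - (d/m)\bI_m}_2 \lesssim \sqrt{d \log(n)}/m$ with high probability. I would propagate this deviation through the sandwich $\bA^\sT \bH_q \bA$ using the operator-norm bound~\eqref{eq:coherence-A}, and through the Hadamard square by expanding $\bY^{\odot 2} = \bX^{\odot 2} + 2 \bX \odot (\bY - \bX) + (\bY - \bX)^{\odot 2}$ with $\bX = (d/m)\bM$ and $\bY = \bM_q \bM_q^\sT$. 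Each Hadamard cross-term is bounded using the Schur product theorem in conjunction with the entrywise coherence bounds~\eqref{eq:coherence-A_max} and~\eqref{eq:coherence-M}. A union bound over the $r$ blocks preserves the target failure probability $2n^{-1}$ while accumulating the $\log(n)$ factor appearing in the final estimate.

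The main obstacle is the nonlinearity of the Hadamard square, which couples the concentration error across every index pair simultaneously and forces the joint use of operator-norm and entrywise coherence bounds. A secondary technical difficulty is the homogeneous bookkeeping in powers of $\mu$: each occurrence of $\bA$ contributes one factor of $\mu$, and the target exponent $\mu^8$ reflects the four factors of $\bM = \bA^\sT \bA$ entering each summand of $\bE_\cB$. These two aspects dictate the precise form of the coherence hypotheses and, in particular, the dichotomous scaling $\max\{m^{-2}, n m^{-4}\}$ that separates the regimes in which either the operator-norm or the entrywise bound dominates.
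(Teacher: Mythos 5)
Your route is genuinely different from the paper's, and the difference is where the trouble lies. The paper never centers $\bE_\cB$ at its expectation: it treats the diagonal by showing $\rE[\bE_\cB(i,i)]=\gamma_i=\gamma$ and applying a scalar Chernoff bound, and it treats the off-diagonal part by asserting — via an ``even distribution'' symmetry argument — that $\rE[m_{i,k}m_{i,k'}m_{j,k}m_{j,k'}]=0$ for $i\neq j$, then writing $\overline{\bE}_\cB=\sum_{i<j}\rho_{i,j}\left\vert \bE_\cB(i,j)\right\vert(\be_i\be_j^\sT+\be_j\be_i^\sT)$ and invoking a matrix Rademacher series bound with variance statistic $v(\overline{\bE}_\cB)\leq 2nd^2(\tfrac1m+\tfrac{n}{m^2})^2(1+\varepsilon)^4$. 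Your decomposition into bias plus fluctuation, with the (correct and rather elegant) identity $\bE_\cB=\sum_q(\bM_q\bM_q^\sT)^{\odot 2}-\bP^2$ and matrix Bernstein on the genuinely independent rank-one summands of $\bH_q$, is in principle a cleaner concentration mechanism than the paper's sign-series argument. But it forces you to confront a bias term that the paper's proof structurally avoids, and you do not actually verify that this bias fits inside~\eqref{eq:coherence-E}.

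Concretely: for distinct $i,j,k,k'$, isotropy gives $\rE_{\ba_i}[m_{i,k}m_{i,k'}]=\tfrac1m m_{k,k'}$, hence $\rE[m_{i,k}m_{i,k'}m_{j,k}m_{j,k'}]=\tfrac{1}{m^2}\rE[p_{k,k'}]=\tfrac{1}{m^3}>0$. Summing over the roughly $n(d-1)$ admissible pairs, the generic off-diagonal entry of $\rE[\bE_\cB]$ is of order $\tfrac{nd}{m^3}+\tfrac{d}{m^2}$, so the ``$\bJ_n$ contribution'' you defer to an elementary operator-norm bound has spectral norm of order $\tfrac{n^2d}{m^3}+\tfrac{nd}{m^2}$ — larger than the target $\max\{m^{-2},nm^{-4}\}d\sqrt{n}\log(n)\mu^8$ by a factor of at least $\sqrt{n}/\log(n)$. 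Equivalently, in your identity the deterministic residue $\tfrac{nd}{m^2}\bP-\bP^2$ left after replacing each $\bH_q$ by $\tfrac{d}{m}\bI_m$ is not negligible, and nothing in your argument cancels it. (This same computation is in tension with the zero-mean claim underlying the paper's own off-diagonal step, so the discrepancy cannot simply be absorbed by citing the paper.) A secondary quantitative gap: for unit-norm isotropic columns the variance parameter of $\sum_{k\in q}\ba_k\ba_k^\sT$ is $\norm{\sum_{k\in q}\rE[(\ba_k\ba_k^\sT-\tfrac1m\bI_m)^2]}_2\approx\tfrac{d}{m}$, so matrix Bernstein yields $\norm{\bH_q-\tfrac{d}{m}\bI_m}_2\lesssim\sqrt{d\log(m)/m}+\log(m)$, not $\sqrt{d\log(n)}/m$; propagating the correct rate through $\bA^\sT(\cdot)\bA$ and the Hadamard cross terms again overshoots~\eqref{eq:coherence-E}. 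As written, the proposal does not close.
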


A numerical validation of the $(\mu,\nu)$-coherence assumption is presented in Section~\ref{sec:numericalSimulation} when the channel matrix $\bA$ is \emph{i.i.d.} Gaussian, or with columns drawn $\emph{i.i.d.}$ uniformly on the sphere.

The $(\mu,\nu)$-coherence assumption on the matrix $\bA$ can be exploited with $p \leq \frac{1}{2}$ to control the spectral distance~\eqref{eq:eve_spectral_distance_infinity} as
\begin{align}\label{eq:spectral_distance_2}
	\norm{\tilde{\bB} - \bB}_2 &=  2 {(1-p)}^{-1}\norm{\bP^{-1} \left( \bE_\cB - \gamma \bI_n \right) \bP^{-1}}_2 \nonumber \\
	& \leq 4 \norm{\bP^{-1}}_2^2 \norm{ \bE_\cB - \gamma \bI_n }_2 \nonumber\\
	& \leq 4 \max\left\{\frac{1}{m^2}, \frac{n}{m^4}\right\} d\sqrt{n} \log(n) \nu^2 \mu^8.
\end{align}
A direct application of Proposition~\ref{prop:clustering_errors} with~\eqref{eq:spectral_distance_2} yields the following characterization of the asymptotic vulnerability of the communication protocol proposed in Section~\ref{sec:secretBlockSparsity} from an eavesdropper attempting to learn the secret block structure $\cB$ via Algorithm~\ref{alg:SpectralClustering}.
\begin{corollary}[Asymptotic vulnerability]
    Suppose that $\bA$ is $\left(\mu,\nu\right)$-coherent, then if
    \begin{equation}\label{eq:asymptotic_vulnerability}
     \delta \coloneqq \frac{\sqrt{2}}{16} \nu^{-2} \mu^{-8} - 2 \max \left\{\frac{1}{m^2}, \frac{n}{m^4} \right\} \sqrt{nd} \geq 0
    \end{equation}
    Eve can recover the block structure $\cB$ by applying Algorithm~\ref{alg:eavesdroppingMomentMethod} provided access to infinitely many samples of the channel outputs $\{\by_1, \by_2, \dots\}$.
\end{corollary}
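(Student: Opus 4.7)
The plan is to reduce the corollary to a direct application of Proposition \ref{prop:clustering_errors}, using the spectral perturbation bound that has essentially already been assembled in equation~\eqref{eq:spectral_distance_2}. The key observation is that in the infinite-sample limit the sample covariance $\hat{\bSigma}_\bz$ produced in line~5 of Algorithm~\ref{alg:eavesdroppingMomentMethod} converges almost surely to the true covariance $\bSigma_\bz$ given by Proposition~\ref{prop:covariance}. Consequently, Eve's estimator $\tilde{\bB}$ coincides deterministically with the expression derived in~\eqref{eq:trueCov-1}, namely $\tilde{\bB} = \bB + 2(1-p)^{-1}\bP^{-1}\left( \bE_{\cB} - \gamma \bI_n \right)\bP^{-1}$, so analyzing Eve's success reduces to bounding the spectral distance $\norm{\tilde{\bB} - \bB}_2$ and comparing it against the threshold $\tfrac{\sqrt{2d}}{8}$ from Proposition~\ref{prop:clustering_errors}.

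The second step is to plug in the $(\mu,\nu)$-coherence hypotheses. From~\eqref{eq:coherence-P} we have $\norm{\bP^{-1}}_2 \leq \nu$, and the fourth-order bound~\eqref{eq:coherence-E} directly controls $\norm{\bE_{\cB} - \gamma \bI_n}_2$. Combining these two estimates with the assumption $p \leq \tfrac{1}{2}$ (so that $(1-p)^{-1} \leq 2$) and the submultiplicativity of the spectral norm yields the chain of inequalities already displayed in~\eqref{eq:spectral_distance_2}: $\norm{\tilde{\bB} - \bB}_2 \leq 4 \max\{m^{-2}, n m^{-4}\} \, d \sqrt{n} \log(n) \, \nu^2 \mu^8$. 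No probabilistic argument is required at this stage, because the infinite-sample regime eliminates all stochasticity from the estimator.

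The third step is a routine algebraic check that the condition $\delta \geq 0$ forces the right-hand side of the previous bound below $\tfrac{\sqrt{2d}}{8}$. Multiplying $\delta \geq 0$ by $\sqrt{d}/16$ and rearranging produces exactly the desired inequality $4 \max\{m^{-2}, nm^{-4}\} \, d\sqrt{n}\, \nu^2 \mu^8 \leq \tfrac{\sqrt{2d}}{8}$, after which Proposition~\ref{prop:clustering_errors}, applied to the matrix $\tilde{\bU}$ built from the $r$ leading eigenvectors of $\tilde{\bB}$ inside Algorithm~\ref{alg:SpegctralClustering}, returns $\hat{\cB} = \cB$.

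The one piece of friction I anticipate is reconciling the $\log(n)$ factor present in bound~\eqref{eq:spectral_distance_2} with its absence in the definition of $\delta$ in~\eqref{eq:asymptotic_vulnerability}. I would handle this either by absorbing the $\log(n)$ into the coherence constant $\mu$ (treating the bound~\eqref{eq:coherence-E} as sharp up to poly-logarithmic factors, in the spirit of how~\eqref{eq:coherence-A_max} is stated), or by reading the corollary in the sharper form where $\mu$ is allowed to grow as $\mathrm{polylog}(n)$. With that caveat, the proof is essentially a single line: invoke~\eqref{eq:spectral_distance_2}, compare with the Davis–Kahan threshold of Proposition~\ref{prop:clustering_errors}, and conclude.
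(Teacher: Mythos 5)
Your proposal follows exactly the paper's route: in the infinite-sample limit $\hat{\bSigma}_\bz=\bSigma_\bz$, so $\tilde{\bB}$ equals the deterministic expression in~\eqref{eq:trueCov-1}, the bound~\eqref{eq:spectral_distance_2} controls $\norm{\tilde{\bB}-\bB}_2$ via the $(\mu,\nu)$-coherence of $\bA$ and $p\leq\tfrac12$, and comparison with the threshold $\tfrac{\sqrt{2d}}{8}$ of Proposition~\ref{prop:clustering_errors} rearranges to $\delta\geq 0$. The $\log(n)$ mismatch you flag is a real inconsistency in the paper itself (the factor appears in~\eqref{eq:coherence-E} and~\eqref{eq:spectral_distance_2} but not in~\eqref{eq:asymptotic_vulnerability}), so your suggestion to absorb it into the coherence constant is the correct reading rather than a gap in your argument.
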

This result suggests that the communication protocol between Alice and Bob proposed in Section~\ref{sec:secretBlockSparsity} is compromised from the knowledge of the ground truth covariance and to a channel of large enough output dimension $m$ with constant reuse of the secret key. We call this regime \emph{asymptotic vulnerability}.

\subsection{Estimation with a Finite Number of Snapshots}

In practice, Eve can access a limited number of snapshots $L$ before Alice terminates the communication or refreshes the structure $\cB$. Consequently, the true covariance $\bSigma_\bz$ always remains unknown to Eve. Instead, she can attempt to estimate $\cB$ from the empirical estimator of the covariance given by
\(
\hat{\bSigma}_{\bz} = \frac{1}{L} \sum_{\ell=1}^L \left(\bz_\ell - \rE[\bz] \right) \left(\bz_\ell - \rE[\bz] \right)^{\sT},
\)
where $\rE[\bz] = p \diag \left ( \bM^2 \right)$ and $\diag(\cdot)$ is the operator that stacks the diagonal elements of a $n \times n$ matrix into an $n$-dimensional vector. Proposition~\ref{prop:estimateOfB} provides recovery guarantees for Eve under the proviso she accesses a large enough number of snapshots~$L$.

\begin{proposition}[Estimation with Finite Numbers of Snapshots]\label{prop:estimateOfB}
	Let the quantity  $\delta$ be as defined in~\eqref{eq:asymptotic_vulnerability} and suppose that $\delta > 0$, then there exist a constants $C > 0$ such that if $L$ satisfies
	\begin{multline}\label{eq:min_L_bounds}
		\frac{\sqrt{L}}{\log(L)} \geq \delta^{-1}  \frac{n^2}{m^2}\log^2(n) \sqrt{d}  \\
        \cdot \left( 1 + \frac{2}{7\log(n)} \frac{\sigma^2}{\mu^2} + \frac{4\beta n}{7 m\log(n)}\frac{\sigma^4}{\mu^4} \right) ,
	\end{multline}
	then the output $\hat{\cB}$ of Algorithm~\ref{alg:eavesdroppingMomentMethod} satisfies $\hat{\cB} = \cB$ with probability greater than $1 - C L^{-1}$.
\end{proposition}

The proof of Proposition~\ref{prop:estimateOfB} is presented in Appendix~\ref{sec:proof_finite_snapshots}.  We observe from~\eqref{eq:min_L_bounds} that even in the absence of noise on the Alice--Eve channel, Eve still needs a non-trivial number of snapshots to recover the block structure provably.

\section{Numerical simulations}\label{sec:numericalSimulation}
Next, we will provide experiments to show the efficacy of our proposed scheme and validate theoretical results.  We underscore that our assumptions are \emph{very favorable} to Eve, who is assumed to know: (1) the channel matrix $\bA_E$, (2) the probability of block activation $p$, and  (3) the block length $d$. More practical conditions (errors in the estimate of $\bA_E$) are provided in Figure~\ref{fig:BER_vsNoisyA} and Eve's performance degrades even further.
\subsection{Coherence Assumption}
We start by validating the scaling laws of the coherence metric proposed in Definition~\ref{def:coherence}. Figure~\ref{fig:coherence} shows the empirical probability of a channel $\bA$ to be $(\mu,\nu)$-coherent for varying values of the parameters $(\mu,\nu)$. Two isotropic probability distributions are considered when the channel: 1) has \emph{i.i.d.} Gaussian entries; 2) \emph{i.i.d} columns drawn uniformly on the sphere. Under the ratio $\frac{m}{n} = \frac{1}{2}$, the numerical simulations suggest that selecting $\mu \geq 1.75$ and $\nu \leq 0.4$ (respectively, $\mu \geq 1.7$ and $\nu \leq 0.8$) is enough to guarantee $(\mu,\nu)$-coherence of the Gaussian channel  (respectively, uniform spherical channel) with high probability, independently of the channel dimensions, under proviso of a large enough $n$. Furthermore, given fixed channel dimensions, the uniform spherical channel has sharper tails than the Gaussian one, resulting in the the more favorable coherence parameters as seen in Figure~\ref{fig:coherence}.

\subsection{Validation of the Spectral Clustering Method}
In this section, we validate the theoretical findings presented in Section \ref{ssec:privacySingleSnapshot}, Section~\ref{ssec:asymptotic_vulnerability}, and Section \ref{sec:eavesdroppingHighOrderMoments} through numerical simulations. Herein, the block compressed sensing problem \eqref{eq:bob_BlockCS} and compressed sensing problem \eqref{eq:eve_OptimalEstimator} are solved using the block-basis pursuit (Block-BP) and basis pursuit (BP) convex relaxation with {\sc Matlab} and the {\sc SPGL1} package~\cite{van2009probing}. For a unitary and isometric matrix $\bA$, the signal-to-noise ratio (SNR) at the channel output is defined as $\operatorname{SNR} \triangleq \frac{\rE \left[\norm{\bA \bx}_2^2\right]}{\rE \left[ \norm{\bw}_2^2 \right]} = \frac{p n^2}{\sigma^2 m^2}$. We subsequently select $\bA$ at random with independent Gaussian entries $a_{i,j} \sim \cN \left(0, \frac{1}{m} \right)$.

\begin{figure}[t]
	\centering
	\begin{minipage}{.49\columnwidth}
		\centering
		\includegraphics[width=\textwidth]{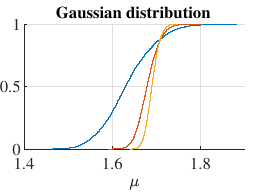} \\
 		\includegraphics[width=\textwidth]{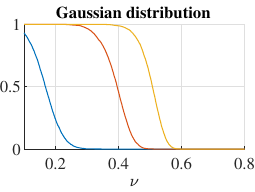}
	\end{minipage}
	\begin{minipage}{.49\columnwidth}
		\centering
		\includegraphics[width=\textwidth]{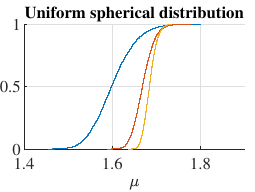} \\
   		\includegraphics[width=\textwidth]{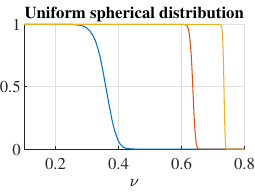}
	\end{minipage}
    \caption{The empirical probabilities of inequality~\eqref{eq:coherence-def} holding for different values of coherence parameters $(\mu,\nu)$. Top row: $\bA$ is a random Gaussian matrix with \emph{i.i.d.} entries. Bottom row: $\bA$ has columns drawn \emph{i.i.d} according to a unitary spherical distribution. In blue: $n=50$, in red: $n=200$, in yellow: $n=400$. Herein, we set $\frac{m}{n} = \frac{1}{2}$. Experiments are averaged over 5000 trials.\label{fig:coherence}}
\end{figure}

We consider the clustering capabilities of Algorithm~\ref{alg:eavesdroppingMomentMethod}. Figure~\ref{fig:cluster} shows the clusters returned by the subroutine Algorithm~\ref{alg:SpectralClustering} for different numbers of snapshots and different SNRs, for the case where $n=400, m=200$ and $\beta = 2.5$ and $r=5$; that is due to the block structure, we have $5$ clusters. It is clear that the value of $L$ (number of snapshots) impacts whether we can identify the clusters and, thus, the block structure. Additionally, high SNR values result in better identifiability of the clusters, especially under a limited number of snapshots, when the signal and the noise empirical covariances are not yet decoupled.

\begin{figure}[t]
	\centering
    \includegraphics[width=0.95\columnwidth]{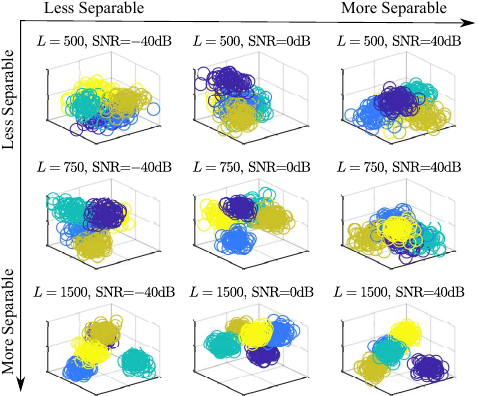}
	\caption{Projections of the clusters estimated by Algorithm~\ref{alg:SpectralClustering} unto $\bbR^3$ for different numbers of snapshots and SNRs. Rows (from top to bottom): $L=500$, $L=750$, $L=1500$. Columns (from left to right): $\operatorname{SNR}=-40\mathrm{dB}$, $\operatorname{SNR}=0\mathrm{dB}$, $\operatorname{SNR}=40\mathrm{dB}$. Other system parameters are $n=400$, $m=200$, $\beta = 2.5$ and $r=5$.}
	\label{fig:cluster}
\end{figure}

\begin{figure}[t]
	\centering
	\includegraphics[width=0.98\columnwidth]{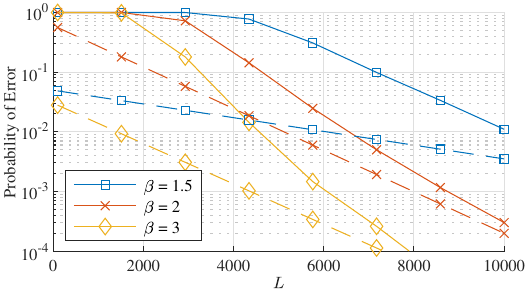}
	\caption{Probability of failure of Algorithm~\ref{alg:eavesdroppingMomentMethod} as a function of the number of snapshots $L$ for different communications rates $\beta$. Dashed lines represent Hoeffding's error rates $p_{\mathrm{Hoeff}}$ detailed in Section~\ref{sec:numericalSimulation} for the corresponding values of $\beta$. Herein, we set $n=200$, $m=100$, $r=5$, and $\operatorname{SNR} = 0\mathrm{dB}$. Experiments are averaged over $10^5$~trials.}
	\label{fig:error_exponent_noNoise}
\end{figure}

Next, we evaluate the probability for Eve to recover the correct block structure $\cB$ from the output of Algorithm~\ref{alg:eavesdroppingMomentMethod} as a function of the number of observed snapshots, $L$, that she has acquired without a refresh of the block structure. We evaluate the empirical error rate of Algorithm~\ref{alg:eavesdroppingMomentMethod}, defined by the fraction random problem instances where $\hat{\cB} \neq \cB$. To assess the secrecy of the proposed protocol, we compare this empirical error rate with the error rate of a Hoeffding test between the probability distribution $\mathcal{Y}$ of the channel output produced by the true block structure $\cB$ and the probability distribution $\mathcal{Y}^\prime$ produced by another block structure $\cB^\prime$. Given the Kullback–Leibler divergence $\KL(\mathcal{Y}, \mathcal{Y}^\prime)$ between those two distributions, Hoeffding's error rate is given by $p_{\mathrm{Hoeff}} = C \exp \left(-L\min_{\mathcal{Y}^\prime}  \left\{ {\KL(\mathcal{Y},\mathcal{Y}^\prime)}^2\right\} \right)$ for some $C>0$, where the minimum is taken over all possible block structures $\cB^\prime$ of $r$-blocks of length $d$ that is not equal to $\mathcal{Y}$. Hoeffding's error rate is an asymptotic statistical lower bound on the error probability for hypothesis testing~\cite{hoeffding1965asymptotically}. As $\mathcal{Y}$ and $\mathcal{Y}^\prime$ are Gaussian mixtures in dimension $m$ with $2^r$ classes, calculating the KL-divergence by a Monte Carlo method is computationally prohibitive, and we evaluate instead its variational approximation~\cite{hershey2007approximating}. The findings are shown in Figure~\ref{fig:error_exponent_noNoise} suggest that larger values of $\beta$ increase Eve's learning rate of the secret block structure, which corroborates with the theoretical results of Proposition~\ref{prop:estimateOfB} as $\sigma^2 \propto \frac{p n^2}{m^2}$ in fixed SNR settings.  Additionally, for larger values of $\beta$, we observe that  Algorithm~\ref{alg:eavesdroppingMomentMethod} achieves an error exponent close to Hoeffding's rate, indicating the near-optimally of the proposed moment method to eavesdrop the block structure in the asymptotic~$L\to \infty$.

\begin{figure}[t]
	\centering
	\includegraphics[width=0.98\columnwidth]{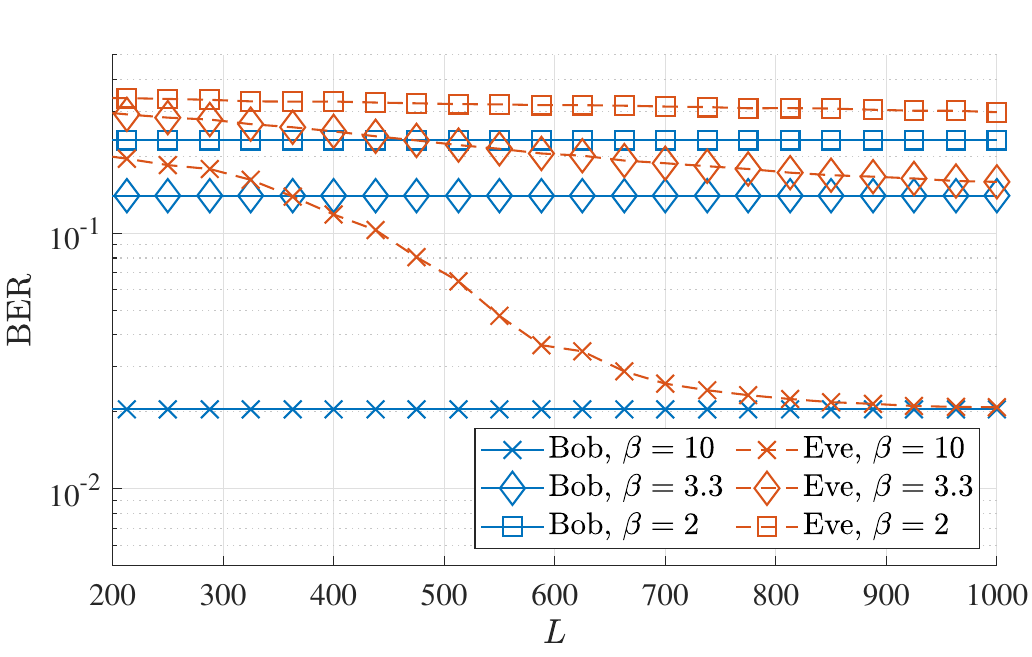}
	\caption{BER as a function of the number of snapshots $L$ for different communication rates $\beta$. Herein, we set $n=400$, $m=200$, $r=20$, and $\operatorname{SNR}=0$dB. Experiments are averaged over $10^5$ trials.}
	\label{fig:BER_vsL}
    \vspace{10pt}
 	\centering
	\includegraphics[width=0.98\columnwidth]{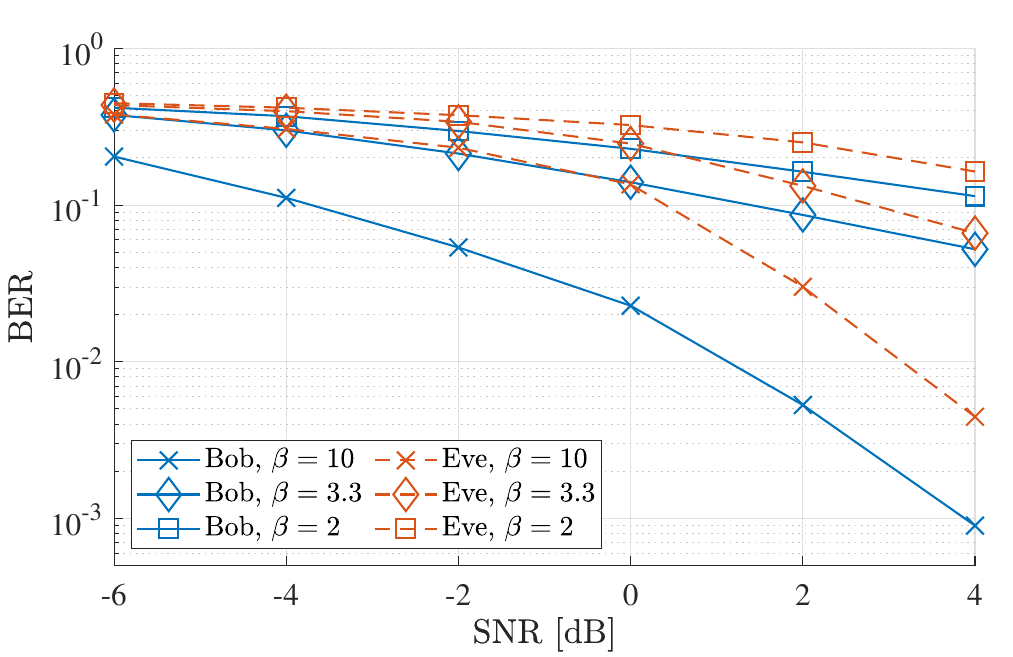}
	\caption{BER as a function of SNR for different communication rates $\beta$. Herein, we set $n=400$, $m=200$, $r=20$, and $L=400$. Experiments are averaged over $10^5$ trials.}
	\label{fig:BER_vsSNR}
\end{figure}

\subsection{Applications to MIMO systems}
Motivated by communication applications, we consider the downlink of a massive MIMO system. We assume Alice transmits parsimoniously messages encoded on a block-sparse BPSK constellation to Bob, meaning that $\bx$ is drawn according to a block-Bernoulli probability distribution, \emph{i.e.} within an active block $x_i = \pm 1$ with independent and equal probability $\frac{1}{2}$, and $x_i = 0$ within a non-active block. We define the bit-error-rate (BER) as the ratio of entries that are in the active support of Alice's message ($x_i \neq 0$) and that are incorrectly decoded by the receiver, \emph{i.e.} $\hat{x}_i \neq x_i$. Assuming that Eve relies on its estimate $\hat{\cB}$ of the block structure obtained from the output of Algorithm~\ref{alg:eavesdroppingMomentMethod}, we empirically evaluate Bob's and Eve's BERs as a function of the number of snapshots in Figure~\ref{fig:BER_vsL}, and as a function of the SNR in Figure~\ref{fig:BER_vsSNR}. The figures suggest that larger values of $\beta$ ease both Bob's and Eve's decoding. Eve can achieve the same BER as Bob if the secret structure is reused sufficiently many times. Additionally, for a fixed number of snapshots, Eve's decoding is more impeded by the noise than Bob's, and the BER margin between Bob and Eve increases with the redundancy parameter~$\beta$. Hence, for fixed channel dimensions, if Alice reduces her communication rate with Bob by selecting a smaller block activation probability $p$, she can harden Eve's decoding. This observation shows the trade-off between the communication rate Alice can achieve and the secrecy against an eavesdropper the protocol can induce. For the example considered, it is clear that for $\beta=10$ and $L<500$ Eve cannot decode while Bob can.

In practice, the channel is usually harder to estimate for Eve than for Bob. Figure~\ref{fig:BER_vsNoisyA} concludes the numerical results by comparing Bob's and Eve's BER when the eavesdropper has perfect and imperfect knowledge of $\bA_E$. Herein, the estimate $\tilde{\bA}_E$ of the channel matrix is modelled as $\tilde{\bA}_E = \bA_{E} + \bW_A$, where $\bW_A$ is an matrix with \emph{i.i.d.} Gaussian entries. The SNR on Eve's estimate of the channel is defined as $\operatorname{SNR}_{A}\triangleq \frac{\rE \left[\|\bm A\|^2_{\sF}\right]}{\rE \left[\|\bm W_{\bm A}\|^2_{\sF}\right]}$. Additionally, a baseline performance comparison is done with a classical MIMO channel of dimension $m \times pn$ where BPSK symbols $\tilde{\bx} \in \mathbb{R}^{pn}$ are transmitted without any physical layer security scheme, and retrieved by a receiver with a maximum likelihood decoder (ML). As expected, imperfect knowledge of the channel matrix further diminishes Eve's decoding performance, enhancing the privacy of the communication protocol. The baseline ML decoding illustrates the compromise between enabling privacy and achieving good error rates.

\begin{figure}[t]
 	\centering
	\includegraphics[width=0.98\columnwidth]{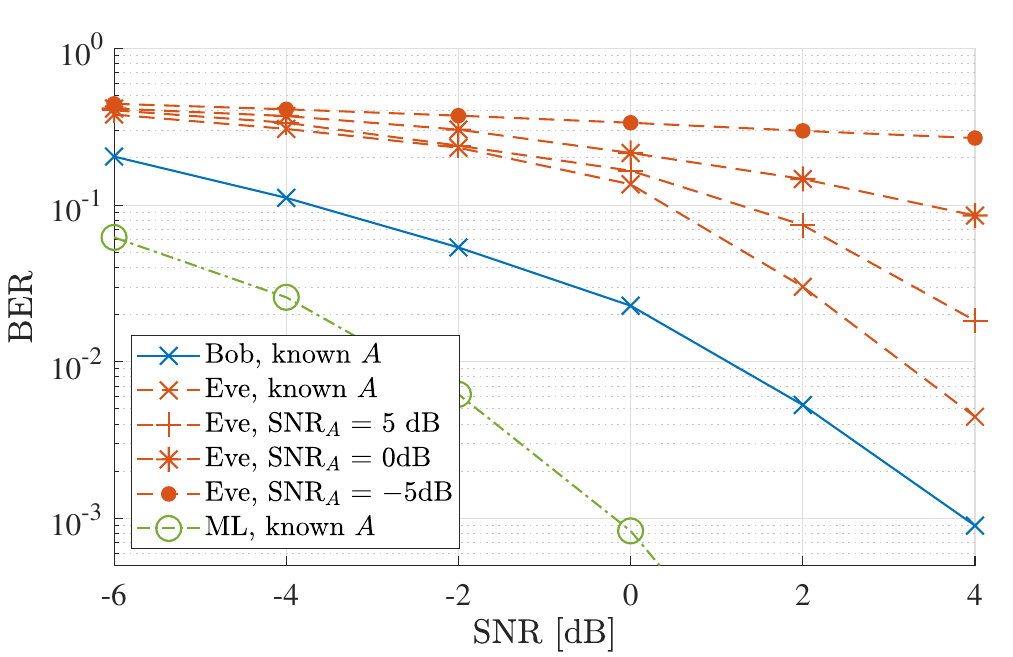}
	\caption{BER as a function of SNR. Herein, we set $\beta=10$, $n=400$, $m=200$, $r=20$, and $L=400$. For the cases with non-perfect knowledge of $\bm A$, it is assumed that Eve only has access to $\bm A + \bm W_{\bm A}$, where each element in $\bm W_{\bm A}$ is white Gaussian noise, and we define $\operatorname{SNR}_{A}\triangleq \frac{\rE \left[\|\bm A\|^2_{\sF}\right]}{\rE \left[\|\bm W_{\bm A}\|^2_{\sF}\right]}$. Experiments are averaged over $10^5$ trials.}
	\label{fig:BER_vsNoisyA}
\end{figure}

\section{Conclusions and Future Work}\label{sec:conclusion}
This article introduced a novel communication protocol with provable privacy guarantees. The proposed method harnesses a secret block-sparse prior to recovering the initial message from underdetermined linear measurements gathered at the output of a fat channel matrix. As block sparsity allows exact recovery in conditions where classical compressed sensing would provably fail, we established the existence of a secure transmission regime to a single snapshot between Alice and Bob. We studied the privacy guarantees of this communication protocol for multiple transmissions without refreshing the shared secret and proposed an algorithm for an eavesdropper to learn the block structure via the method of moments. The proposed block structure estimator appears to be asymptotically near-optimal. We validated the privacy benefits of this framework through numerical experiments.

Possible extensions of this work include a comprehensive study of the trade-off between the communication rate that Alice and Bob can achieve and the lifespan of the secret block structure. Additionally, the proposed scheme paves the way for further linear inverse problem-based implementation of private communication protocols over the physical layer.

\appendices

\section{Proof of Proposition \ref{prop:covariance}}\label{sec:proof_covariance}
Let $\bu = \bM \bx = \bA^\sT \bA \bx$ and $\tilde{\bw} = \bA^\sT \bw$ for convenience purposes. Moreover let $\bz = \left( \bA^{\sT} \by \right) \odot \left( \bA^{\sT} \by \right)$. We have that
\begin{align}\label{eq:z-1}
	\bz ={}& (\bA^\sT \by) \odot (\bA^\sT \by) = (\bu + \tilde{\bw}) \odot (\bu + \tilde{\bw}) \nonumber\\
	={}& \bu \odot \bu + 2 \bu \odot \tilde{\bw} + \tilde{\bw} \odot \tilde{\bw}.
\end{align}
We aim to derive the expression of the covariance of $\bz$. First, the independence between $\bx$ and $\bw$ implies the independence between $\bu$ and $\tilde{\bw}$. Additionally, the assumptions $\rE[\bx] = \bm{0}$ and $\rE[\bw] = \bm{0}$ imply that $\rE[\bu] = 0$ and $\rE[\tilde{\bw}] = 0$. This yields \vspace{-10pt}
\begin{subequations}
	\begin{align}
		\cov(\bu \odot \bu, \tilde{\bw} \odot \tilde{\bw}) &= \bm{0},\\
		\cov(\bu \odot \bu, \bu \odot \tilde{\bw}) &= \bm{0}, \\
		\cov(\bu \odot \tilde{\bw}, \tilde{\bw} \odot \tilde{\bw}) &= \bm{0}.
	\end{align}
\end{subequations}
Hence the covariance matrix $\bSigma_\bz = \cov(\bz,\bz)$ of the random vector $\bz$ reduces to
\begin{align}
	\label{eq:sigmaz-1}
	\bSigma_\bz = \bSigma_{\bu \odot \bu} + 2 \bSigma_{\bu \odot \tilde{\bw}} + \bSigma_{\tilde{\bw} \odot \tilde{\bw}}.
\end{align}
We derive in the sequel the expression of each of the three matrices on the right-hand side of \eqref{eq:sigmaz-1}.

\paragraph{Expression of $\bSigma_{\bu \odot \bu}$} By definition of the vector $\bu$, we have for any $i = 1, \dots, n$ that
\begin{align}
	u_i ={}& \sum_{k=1}^n x_k \inp{\ba_i}{\ba_k} = \sum_{k=1}^n x_k m_{i,k}.
\end{align}
Thus, denoting $p_{i,k}$ the $(i,k)$-th term of the matrix $\bP = \bM \odot \bM$, the expected value  $\rE[u_i^2]$ of the random variable $u_i^2$ is given by
\begin{align}\label{eq:exp_ui_square}
    \rE[u_i^2] ={}& \rE\left[\left(\sum_{k = 1}^n x_k m_{i,k} \right) \left(\sum_{{k^\prime} = 1}^n x_{k^\prime} m_{i,{k^\prime}}\right)\right] \nonumber \\
    ={}&\sum_{k=1}^n \sum_{{k^\prime} = 1}^n \rE[x_k x_{k^\prime}] m_{i,k} m_{i,{k^\prime}} \nonumber \\
    ={}&\sum_{k=1}^n \rE[x_k^2] m_{i,k}^2 = \sum_{k=1}^n \rE[x_k^2] p_{i,k}.
\end{align}
For readability, we drop the summation interval from $1$ to $n$ in the following. A direct calculation of the expected value $\rE[u_i^2 u_j^2]$ of the product $u_i^2 u_j^2$ yields
\begin{align}\label{eq:expSquares}
    \MoveEqLeft \rE[u_i^2 u_j^2] & \nonumber \\
		& \hspace{-0.3in} = \sum_{k_1} \sum_{k_1^\prime} \sum_{k_2} \sum_{k_2^\prime} \rE[x_{k_1} x_{k_1^\prime} x_{k_2} x_{k_2^\prime}] m_{i,k_1} m_{i,k_1^\prime} m_{j,k_2} m_{j,k_2^\prime} \nonumber \\
    & \hspace{-0.3in}  = \sum_{k} \sum_{{k^\prime}} \rE[x_k^2 x_{k^\prime}^2] p_{i,k} p_{j,{k^\prime}}  \nonumber \\
		& \quad + 2 \sum_{k} \sum_{{k^\prime} \neq k} \rE[x_k^2 x_{k^\prime}^2]  m_{i,k} m_{i,{k^\prime}} m_{j,k} m_{j,{k^\prime}}.
\end{align}
Equations~\eqref{eq:exp_ui_square} and~\eqref{eq:expSquares} immediately lead to an expression of the generic term of the covariance matrix $\bSigma_{\bu \odot \bu}$ of the form
\begin{align}\label{eq:cov-1}
	\MoveEqLeft \bSigma_{\bu \odot \bu}(i,j) = \rE[u_i^2 u_j^2] - \rE[u_i^2]\rE[u_j^2]& \nonumber\\
	&= \sum_{k} \sum_{{k^\prime}} \left(\rE[x_k^2 x_{k^\prime}^2] - \rE[x_k^2]\rE[x_{k^\prime}^2] \right) p_{i,k} p_{j,k^\prime} \nonumber \\
	&\qquad + 2 \sum_{k} \sum_{{k^\prime} \neq k} \rE[x_k^2 x_{k^\prime}^2] m_{i,k} m_{i,{k^\prime}} m_{j,k} m_{j,{k^\prime}} \nonumber \\
	& = \bp_i^\sT \bSigma_{\bx \odot \bx} \bp_j \nonumber \\
  & \qquad + 2 \sum_{k} \sum_{{k^\prime} \neq k} \rE[x_k^2 x_{k^\prime}^2] m_{i,k} m_{i,{k^\prime}} m_{j,k} m_{j,{k^\prime}}.
\end{align}

Next, the second sum on the right-hand side of expression \eqref{eq:cov-1} can be reformulated as
\begin{align}\label{eq:cov-2}
	\MoveEqLeft \sum_{k} \sum_{{k^\prime} \neq k} \rE[x_k^2 x_{k^\prime}^2] m_{i,k} m_{i,{k^\prime}} m_{j,k} m_{j,{k^\prime}} & \nonumber \\
	&={}\sum_{k} \sum_{\substack{{k^\prime} \neq k \\ \cB({k^\prime}) = \cB(k)}} \rE[x_k^2 x_{k^\prime}^2] m_{i,k} m_{i,{k^\prime}} m_{j,k} m_{j,{k^\prime}} \nonumber \\
	&{} \qquad + \sum_{k} \sum_{\substack{{k^\prime} \neq k \\ \cB({k^\prime}) \neq \cB(k)}} \rE[x_k^2 x_{k^\prime}^2] m_{i,k} m_{i,{k^\prime}} m_{j,k} m_{j,{k^\prime}} \nonumber\\
	& ={} p \sum_{k} \sum_{\substack{{k^\prime} \neq k \\ \cB({k^\prime}) = \cB(k)}} m_{i,k} m_{i,{k^\prime}} m_{j,k} m_{j,{k^\prime}} \nonumber \\
	&{} \qquad + p^2 \sum_{k} \sum_{\substack{{k^\prime} \neq k \\ \cB({k^\prime}) \neq \cB(k)}}  m_{i,k} m_{i,{k^\prime}} m_{j,k} m_{j,{k^\prime}} \nonumber \\
	& = p \bE_\cB(i,j) + p^2 \bF(i,j).
\end{align}
Given the fourth-order statistics~\eqref{eq:fourthOrderMomentOfX} on the block Bernouilli-Gaussian distribution of $\bx$, we have the matrix expression \(\bSigma_{\bx \odot \bx} = 2p\bI_n + p(1-p)\bB\). Substituting the previous with~\eqref{eq:cov-2} into~\eqref{eq:cov-1} yields the expression
\begin{align}\label{eq:cov-u}
	\bSigma_{\bu \odot \bu} ={}& \bP \bSigma_{\bx \odot \bx} \bP + 2 p \bE_\cB + 2 p^2 \bF \nonumber \\
	={}& p(1-p)\bP \bB \bP + 2p\bE_{\cB} + 2p\bP^2  + 2p^2 \bF.
\end{align}

\paragraph{Expression of $\bSigma_{\bu \odot \tilde{\bw}}$} From the independence assumption between $\bx$ and $\bw$, the expectation reduces to
\begin{align}\label{eq:cross-1}
	\rE[(\bu \odot \tilde{\bw})_i] ={}& \rE\left[ \left( \sum_k m_{i,k} x_k \right)\left( \sum_{k^\prime} a_{{k^\prime},i} w_{k^\prime} \right)\right] \nonumber \\
	={}&\sum_k \sum_{k^\prime} \rE[x_k] \rE[w_{k^\prime}] m_{i,k} a_{{k^\prime},i} = 0,
\end{align}
and the correlation term is
\begin{align}\label{eq:cross-2}
	\MoveEqLeft \rE[(\bu \odot \tilde{\bw})_i (\bu \odot \tilde{\bw})_j] & \nonumber \\
	& ={} \rE\left[ \left( \sum_{\ell_1} m_{i,\ell_1} x_{\ell_1} \right)\left( \sum_{k^\prime_1} a_{{k^\prime_1},i} w_{\ellp_1} \right) \right. \nonumber \\
	&{} \qquad \left. \left( \sum_{\ell_2} m_{j,\ell_2} x_{\ell_2} \right)\left( \sum_{\ell_2^\prime} a_{\ell_2^\prime,j} w_{\ell_2^\prime} \right) \right] \nonumber\\
	&={} \left(\sum_k \rE[x_k^2] m_{i,k} m_{j,k} \right) \left( \sum_{{k^\prime}} \rE[w_{{k^\prime}}^2] a_{{k^\prime},i}  a_{{k^\prime},j} \right) \nonumber \\
	&={} p \sigma^2 \left(\sum_k m_{i,k} m_{j,k} \right) \left( \sum_{{k^\prime}}  a_{{k^\prime},i}  a_{{k^\prime},j} \right) \nonumber \\
    &= p \sigma^2 \left(\sum_k m_{i,k} m_{j,k} \right)  m_{i,j}
\end{align}
and we use \eqref{eq:cross-1} and \eqref{eq:cross-2} to get
\begin{align}\label{eq:cov-uwtilde}
	\bSigma_{\bu \odot \tilde{\bw}} ={}& p \sigma^2 \bM^2 \odot \bM.
\end{align}

\paragraph{Expression of $\bSigma_{\tilde{\bw} \odot \tilde{\bw}}$} We follow analogous reasoning than for the previous term. First, the expectation is given by
\begin{align}\label{eq:corrNoise-1}
	\rE[(\tilde{\bw} \odot \tilde{\bw})_i] ={}& \rE\left[ \left( \sum_k a_{k,i} w_k \right)\left( \sum_{k^\prime} a_{{k^\prime},i} w_{k^\prime} \right)\right] \nonumber \\
	={}& \sum_k \sum_{k^\prime} \rE[w_k w_{k^\prime}] a_{k,i} a_{{k^\prime},i} \nonumber\\
	={}& \sum_k \rE[w_k^2] a_{k,i}^2 = \sigma^2 \sum_k  a_{k,i}^2.
\end{align}
Hence, the generic covariance term is
\begin{align}\label{eq:cross-3}
	\MoveEqLeft \rE[( \tilde{\bw} \odot \tilde{\bw})_i ( \tilde{\bw}\odot \tilde{\bw})_j] - \rE[(\tilde{\bw} \odot \tilde{\bw})_i] \rE[(\tilde{\bw} \odot \tilde{\bw})_j]& \nonumber \\
	&={}\rE\left[ \left( \sum_{k_1} a_{{k_1},i} w_{k_1} \right)\left( \sum_{k^\prime_1} a_{{k^\prime_1},i} w_{k^\prime_1} \right) \right. \nonumber \\
	& \qquad \left. \left( \sum_{k_2} a_{k_2,j} w_{k_2} \right)\left( \sum_{k_2^\prime} a_{k_2^\prime,j} w_{k_2^\prime} \right) \right] \nonumber \\
	&={} \sum_k \sum_{k^\prime} \left(\rE[w_k^2 w_{k^\prime}^2] - \rE[w_k^2]\rE[w_k^2] \right) a_{k,i}^2 a_{{k^\prime},j}^2 \nonumber \\
	&{} \qquad + 2 \sum_k \sum_{{k^\prime} \neq k} \rE[w_k^2 w_{k^\prime}^2] a_{k,i} a_{k,j} a_{{k^\prime},i} a_{{k^\prime},j} \nonumber \\
	&={} \sum_k \sum_{k^\prime} \left(\rE[w_k^2 w_{k^\prime}^2] - \rE[w_k^2]\rE[w_k^2] \right) a_{k,i}^2 a_{{k^\prime},j}^2 \nonumber \\
	&{} \qquad + 2 \sigma^4 \sum_k \sum_{{k^\prime} \neq k} a_{k,i} a_{k,j} a_{{k^\prime},i} a_{{k^\prime},j}.
\end{align}
Furthermore as $\bw$ is an \emph{i.i.d.} white Gaussian random vector with variance $\sigma^2$, we have that $\bSigma_{\bw \odot \bw} = 2 \sigma^4 \bI_m$. Hence, we may write
\begin{align}\label{eq:cov-wtilde}
	\bSigma_{\tilde{\bw}\odot \tilde{\bw}} ={}& \left(\bA \odot \bA\right)^{\sT} \bSigma_{\bw \odot \bw} \left(\bA \odot \bA \right) + 2 \sigma^4 \bG \nonumber\\
	={}& 	2 \sigma^4 \left( \left(\bA \odot \bA\right)^{\sT}  \left(\bA \odot \bA \right) + \bG \right).
\end{align}

We achieve the desired statement by substituting \eqref{eq:cov-u}, \eqref{eq:cov-uwtilde} and \eqref{eq:cov-wtilde} into \eqref{eq:sigmaz-1}.
\qed

\section{Proof of Proposition \ref{prop:estimateOfB}}\label{sec:proof_finite_snapshots}
We start the proof by noticing that from Proposition~\ref{prop:covariance}, the indicator matrix $\bB$ of the block structure matrix $\cB$ is given by
\begin{equation}
	\bB = \frac{1}{p(1-p)} \bP^{-1} \left( \bSigma_\bz - 2 p\bE_\cB - \bC \right) \bP^{-1},
\end{equation}
where $\bC$ is defined in \eqref{eq:C-expression} and is independent of $\cB$. The spectral distance $\norm{\tilde{\bB}- \bB}_2$ can be bounded with the triangle inequality, the $(\mu,\nu)$-incoherence of the matrix $\bA$, and the assumption $p \leq \frac{1}{2}$ as follows
\begin{align}\label{eq:B_Spectral_distance_1}
	\MoveEqLeft \norm{\tilde{\bB} - \bB}_2 &  \nonumber \\
  &\hspace{-0.3in}=\frac{1}{p(1-p)} \left\Vert \bP^{-1} \Big( \hat{\bSigma}_\bz - \bSigma_\bz + 2p(\bE_\cB - \gamma \bI_n) \Big) \bP^{-1} \right\Vert_2 \nonumber \\
  & \hspace{-0.3in} \leq \left\Vert \bP^{-1} \right\Vert_2^2 \Big( 2p^{-1} \norm{\hat{\bSigma}_\bz - \bSigma_\bz}_2  + 4 \norm{\bE_\cB - \gamma \bI_n}_2 \Big) \nonumber \\
  & \hspace{-0.3in} \leq \nu^2 \Bigg( 2 p^{-1} \norm{\hat{\bSigma}_\bz - \bSigma_\bz}_2 \nonumber  \\
	& \qquad \qquad + 4\max\left\{\frac{1}{m^2}, \frac{n}{m^4}\right\} d\sqrt{n} \log(n) \mu^8  \Bigg).
\end{align}
We obtain from~\eqref{eq:B_Spectral_distance_1} and Proposition~\ref{prop:clustering_errors} that Algorithm~\ref{alg:eavesdroppingMomentMethod} outputs the true block structure if
\begin{align}\label{eq:sigma_hat_sufficient}
    \MoveEqLeft \norm{\hat{\bSigma}_\bz - \bSigma_\bz}_2 & \nonumber \\
    & \leq p\sqrt{d} \mu^8 \left(\frac{\sqrt{2}}{16} \nu^{-2}\mu^{-8} - 2\max\left\{\frac{1}{m^2}, \frac{n}{m^4} \right\}d\sqrt{n}\right) \nonumber \\
    & \leq p\sqrt{d} \mu^{8} \delta,
\end{align}
where the right-hand side of~\eqref{eq:sigma_hat_sufficient} is non-negative by the assumption $\delta > 0$. The estimated covariance error on the left-hand side of~\eqref{eq:sigma_hat_sufficient} can be made arbitrarily small for a sufficiently large number of snapshots~$L$. Lemma~\ref{lem:concentrationCovariance} provides a high-probability bound on the error on the estimated covariance error as $L\to\infty$ in terms of the problem parameters.

\begin{lemma}[Covariance estimation]\label{lem:concentrationCovariance}
Under the hypothesis of Proposition~\ref{prop:estimateOfB}, there exist a constant $C>0$ such that the event
\begin{multline}
     \norm{ \hat{\bSigma}_\bz - \bSigma_\bz}_2 \leq  \frac{\log(L)}{\sqrt{L}} \beta^{-1} \frac{n}{m} \log^2(n) d \mu^8 \\
     \cdot \left( 1 + \frac{2}{7 \log(n)} \frac{\sigma^2}{\mu^2} + \frac{4\beta n}{7 m\log(n)}\frac{\sigma^4}{\mu^4} \right)
\end{multline}
holds with probability greater than $1 - C L^{-1}$.
\end{lemma}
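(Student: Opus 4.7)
The plan is to cast $\hat{\bSigma}_\bz - \bSigma_\bz$ as a normalized sum of $L$ independent centered random matrices and to invoke a matrix Bernstein-type inequality, combined with the $(\mu,\nu)$-coherence hypothesis on $\bA$ to control both the variance proxy and the tail behaviour of each summand. Setting $\bm{\xi}_\ell \coloneqq \bz_\ell - \rE[\bz]$ and re-using the decomposition $\bz_\ell = \bu_\ell \odot \bu_\ell + 2 \bu_\ell \odot \tilde{\bw}_\ell + \tilde{\bw}_\ell \odot \tilde{\bw}_\ell$ already established in the proof of Proposition~\ref{prop:covariance}, $\bm{\xi}_\ell$ decomposes into three uncorrelated centered pieces, each being a polynomial of degree at most four in independent Gaussian variables (the non-zero coordinates of $\bx_\ell$ and the noise $\bw_\ell$). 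As a consequence, every entry of $\bm{\xi}_\ell$ is sub-Weibull of order~$1/2$, and each summand $\bm{\xi}_\ell \bm{\xi}_\ell^\sT$ has heavy but only poly-logarithmically heavy tails.

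The deviation bound would then follow from a truncated matrix Bernstein argument applied to $L^{-1} \sum_{\ell = 1}^L \bigl( \bm{\xi}_\ell \bm{\xi}_\ell^\sT - \rE[\bm{\xi}_\ell \bm{\xi}_\ell^\sT] \bigr)$: truncate $\bm{\xi}_\ell \bm{\xi}_\ell^\sT$ at a radius $R$ growing poly-logarithmically in $L$, invoke the classical matrix Bernstein inequality of~\cite{tropp2015introduction} on the truncated sum, and control the residual through a union bound exploiting the sub-Weibull tails. This produces a deviation of order $\sqrt{V \log(n)/L} + R \log(n)/L$, with the first term dominating in the regime targeted by Proposition~\ref{prop:estimateOfB}; the logarithmic inflation to the announced $\log(L)/\sqrt{L}$ rate comes from the truncation radius.

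The core remaining task is to control the matrix variance $V \coloneqq \norm{\rE[\bm{\xi}_1 \bm{\xi}_1^\sT \bm{\xi}_1 \bm{\xi}_1^\sT]}_2$, which by the three-way decomposition of $\bm{\xi}_\ell$ splits into a pure-signal contribution (free of $\sigma$), a cross contribution proportional to $\sigma^2$, and a pure-noise contribution proportional to $\sigma^4$. Expanding entry-wise and applying Isserlis' theorem, the resulting eighth-order Gaussian expectations reduce to sums of products of entries of $\bM$ and $\bA$ that regroup into the fourth-order patterns $\bE_\cB$, $\bF$, $\bG$ of~\eqref{eq:matrixVariables-def} and a few analogous structures, all of which are governed by the $(\mu,\nu)$-coherence bounds~\eqref{eq:coherence-def}. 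Tracking these contributions, the signal term produces the dominant factor $p^2 d^2 (n/m)^2 \log^4(n) \mu^{16}$, the cross term produces the $\sigma^2$ correction via~\eqref{eq:coherence-M}, and the noise term produces the $\sigma^4$ correction via~\eqref{eq:coherence-A}--\eqref{eq:coherence-A_max}. Substituting $p = m/(\beta n)$ and taking square roots converts these estimates into the announced prefactor $\beta^{-1} (n/m) \log^2(n) d \mu^8 \bigl( 1 + \tfrac{2}{7 \log(n)} \tfrac{\sigma^2}{\mu^2} + \tfrac{4 \beta n}{7 m \log(n)} \tfrac{\sigma^4}{\mu^4}\bigr)$, with the precise numerical constants $2/7$ and $4/7$ calibrated from the Isserlis coefficients of each variance component.

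The main technical obstacle is the combinatorial bookkeeping in the variance estimation: each entry of $\bm{\xi}_1 \bm{\xi}_1^\sT$ is a sum of monomials of degree up to eight in the entries of $\bA$, and those sums must be regrouped so that the coherence inequalities, which only cover the specific fourth-order patterns $\bE_\cB$, $\bF$, $\bG$, apply without introducing an extraneous power of $n$. A secondary point is verifying that the truncation radius can be chosen compatibly with the probability $1 - C L^{-1}$ claimed in the lemma; this follows from the sub-Weibull tails of the summands but requires careful calibration of constants.
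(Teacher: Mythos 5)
Your overall strategy---write $\hat{\bSigma}_\bz-\bSigma_\bz$ as a normalized sum of independent centered matrices and apply a matrix Bernstein inequality under the coherence hypothesis---is the same as the paper's, but you then diverge onto a substantially harder road. The paper invokes a covariance-estimation form of matrix Bernstein (its Proposition on matrix Bernstein for covariance estimation) in which the variance proxy is $\norm{\bSigma_\bz}_2^2$ and the summands are \emph{assumed} bounded by $C\log(L)\norm{\bSigma_\bz}_2$; the entire remaining work is then a deterministic operator-norm bound on $\bSigma_\bz$ itself, obtained by applying the triangle inequality to the explicit decomposition $\bSigma_\bz = p(1-p)\bP\bB\bP + 2p\bE_\cB + \bC$ from Proposition~\ref{prop:covariance} and feeding in the coherence bounds \eqref{eq:coherence-def} together with $\norm{\bX\odot\bY}_2\le\norm{\bX}_2\norm{\bY}_{\max}$. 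This yields $\norm{\bSigma_\bz}_2\le 7pd\frac{n^2}{m^2}\log^2(n)\mu^8\bigl(1+\frac{2}{7\log(n)}\frac{\sigma^2}{\mu^2}+\frac{4}{7p\log(n)}\frac{\sigma^4}{\mu^4}\bigr)$, and the lemma follows by setting $t=\frac{\log(L)}{7\sqrt{L}}\norm{\bSigma_\bz}_2$. So the constants $2/7$ and $4/7$ are not Isserlis coefficients---they are just the cross and noise terms of this bound divided by the leading coefficient $7$---and the $\log(L)/\sqrt{L}$ rate comes from the choice of $t$, not from a truncation radius. Only fourth-order patterns in $\bA$ ever appear, which is exactly what Definition~\ref{def:coherence} covers.

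The genuine gap in your route is the variance step you yourself flag as the ``core remaining task'': bounding $V=\norm{\rE[\bm{\xi}_1\bm{\xi}_1^{\sT}\bm{\xi}_1\bm{\xi}_1^{\sT}]}_2$ requires controlling \emph{eighth}-order patterns in the entries of $\bA$, and the coherence hypothesis supplies no such bounds; without a new concentration lemma in the spirit of Lemma~\ref{lem:concentrationEB} for those patterns, the argument cannot be closed, and it is precisely to avoid this that the paper routes everything through $\norm{\bSigma_\bz}_2$. There is also a quantitative inconsistency in your sketch: a dominant variance term $p^2d^2(n/m)^2\log^4(n)\mu^{16}$ would, after taking the square root and substituting $p=m/(\beta n)$, give a prefactor $\beta^{-1}d\log^2(n)\mu^8$, which is short of the stated bound by a factor $n/m$; the variance consistent with the lemma is $p^2d^2(n/m)^4\log^4(n)\mu^{16}$, i.e.\ $\norm{\bSigma_\bz}_2^2$ up to constants. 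On the other hand, your truncation/sub-Weibull discussion addresses a point the paper glosses over---the boundedness of $\norm{\bz_k-\rE[\bz_k]}_2$ is taken as a hypothesis of the paper's Bernstein proposition rather than verified---so that part of your plan is a genuine strengthening, provided the eighth-order moment bookkeeping were actually supplied.
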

For readability, the proof of Lemma~\ref{lem:concentrationCovariance} is deferred to Appendix~\ref{sec:proof_lemma_concentrationCovariance}.
It suffices to replace the left-hand side of inequation~\eqref{eq:sigma_hat_sufficient} with the high probability bound given Lemma~\ref{lem:concentrationCovariance} to yield the desired statement. \qed

\section{Proof of the Technical Lemmas}
\subsection{Proof of Lemma~\ref{lem:concentrationEB}}\label{sec:concentration_EB}

We start by studying the expected value of the diagonal terms of $\bE_\cB$. We have that
\begin{align}\label{eq:EB_diag_term}
\bE_\cB(i,i) = \sum_{k} \sum_{\substack{{k^\prime} \neq k \\ \cB(k) = \cB({k^\prime})}} p_{i,k} p_{i,{k^\prime}},
\end{align}
and we write for each $i = \{1,\dots,n\}$
\begin{equation}\label{eq:gamma_i}
\gamma_i = \sum_{k} \sum_{\substack{{k^\prime} \neq k \\ \cB(k) = \cB({k^\prime})}} \rE[p_{i,k}] \rE[p_{i,{k^\prime}}].
\end{equation}
By the isotropy assumption on the matrix $\bA$, $\gamma_i$ is constant for different values of $i$, and we may write $\gamma \triangleq \gamma_1 = \dots = \gamma_n$. Moreover, by~\eqref{eq:EP_expr}, the right hand side of~\eqref{eq:gamma_i} is a summation over $n(d-1)$ elements yielding
\begin{equation}\label{eq:ep_prod}
    \rE[p_{i,k}] \rE[p_{i,{k^\prime}}] = \begin{cases}
        \frac{1}{m} & \text{if } k = i \text{ or } k^\prime = i \\
        \frac{1}{m^2} & \text{otherwise}.
    \end{cases}
\end{equation}
Counting the number of occurrences in each case, we have
\(
    \gamma = \frac{2(d-1)}{m} + \frac{(n-2)(d-1)}{m^2}.
\)
Additionally, under the lemma's conditions,~\eqref{eq:EB_diag_term} and \eqref{eq:gamma_i} imply
\begin{align}\label{eq:Eb_diag_bound}
    \MoveEqLeft \left\vert \bE_\cB(i,i) - \gamma \right\vert = \left\vert \bE_\cB(i,i) - \gamma_i \right\vert \nonumber \\
    &=\left\vert \bE_\cB(i,i) - \sum_{k} \sum_{\substack{{k^\prime} \neq k \\ \cB(k) = \cB({k^\prime})}} \rE[p_{i,k}] \rE[p_{i,{k^\prime}}] \right\vert & \nonumber \\
    & \leq \sum_{k} \sum_{\substack{{k^\prime} \neq k \\ \cB(k) = \cB({k^\prime})}} \left\vert p_{i,k}p_{i,{k^\prime}} - \rE[p_{i,k}] \rE[p_{i,{k^\prime}}] \right\vert \nonumber \\
    &\leq \sum_k \sum_{\substack{{k^\prime} \neq k \\ \cB(k) = \cB({k^\prime})}} \left\vert p_{i,k} - \rE[p_{i,k}]\right\vert\rE[p_{i,{k^\prime}}] \left( 1 +  \varepsilon \right) \nonumber \\
    & =  \left( 1 +  \varepsilon \right) \sum_k \left\vert p_{i,k} - \rE[p_{i,k}]\right\vert \left(\sum_{\substack{{k^\prime} \neq k \\ \cB(k) = \cB({k^\prime})}} \rE[p_{i,{k^\prime}}] \right)
\end{align}
where we used in the second inequality the assumption $\left\vert p_{i,{k^\prime}} - \rE[p_{i,{k^\prime}}] \right\vert \leq \rE[p_{i,{k^\prime}}]\varepsilon$. As a result, by the isometry assumption, the terms of the summation in the right-hand side of~\eqref{eq:Eb_diag_bound} are \emph{independent} and \emph{bounded} by ${(1+\varepsilon)}$ and $\frac{{(1+\varepsilon)}}{m^2}$ when $k = i$ and $k \neq i$, respectively. Hence, the Chernoff bound can be applied~\cite{boucheron2013concentration}, and we have
\begin{multline}\label{eq:Eb_diag_chernoff}
    \bbP\left\{ \left\vert \bE_\cB(i,i) - \gamma \right\vert \leq \left(1 + \frac{d-1}{m^2}  \sqrt{2n} \log(n) \right)  \left( 1 + \varepsilon\right)^2 \right\} \\ \geq 1 - 2n^{-2}.
\end{multline}

On the off-diagonal, because of the isotropy assumption, the random variable $m_{i,k} m_{i,{k^\prime}} m_{j,k} m_{j,{k^\prime}}$ with $i \neq j$ has an even distribution for all $k$ and ${k^\prime}$  whenever $i \neq j$. Therefore its expected value is null, that is $\rE[m_{i,k} m_{i,{k^\prime}} m_{j,k} m_{j,{k^\prime}}] = 0$. Denote by $\overline{\bE}_\cB$ the matrix with off-diagonal terms equal to $\bE_\cB$ with diagonal entries $\overline{\bE}_\cB(i,i) = 0$ for all $i\in \{1, \dots, n \}$. Relying on the symmetrization principle, we introduce the Rademacher random variable $\rho_{i,j} = \sign(\bE_\cB(i,j))$, where $\sign(\cdot)$ denotes the signum function. We note that $\{\rho_{i,j}\}_{i;j\geq i+1}$  are pair-wise independent. Furthermore, we can decompose the matrix $\overline{\bE}_\cB$ as the sum
\begin{align} \label{eq:EB_decomposition}
	\overline{\bE}_\cB ={}& \sum_{i=1}^n \sum_{j=i+1}^n \rho_{i,j} \left\vert \bE_\cB(i,j) \right\vert \left(\be_i^{\sT} \be_j + \be_j^{\sT} \be_i \right).
\end{align}
Next, we recall in Proposition \ref{prop:rademacherMatrix_1} (see \emph{e.g.} \cite[Theorem 4.1.1]{tropp2015introduction}) a matrix norm concentration inequality for matrices with Rademacher entries.

\begin{proposition}[Sum of symmetric Rademacher matrix series]
 \label{prop:rademacherMatrix_1}
Consider a fixed symmetric matrix $\bB$ of dimension $n$. Let $\{\rho_{i,j}\}_{i;j\geq i+1}$ be a finite sequence of independent Rachemacher variables, and  introduce the matrix Rademacher series
\begin{equation}
	\bZ = \sum_{i=1}^n \sum_{j=i+1}^n \rho_{i,j} b_{i,j} \left(\be_i^{\sT} \be_j + \be_j^{\sT} \be_i \right).
\end{equation}
Let $v$ be the matrix variance statistic of the Rademacher sum defined as
\(
	v(\bZ) = \max_j \{ \left\Vert \bb_j \right\Vert^2_2 \}
\)
then for all $t>0$ we have
\begin{align}
	\bbP \{ \left\Vert \bZ \right\Vert_2 \geq t \} \leq{}& 2n \exp \left(\frac{-t^2}{2v(\bZ)} \right).
\end{align}
\end{proposition}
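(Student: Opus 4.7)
The plan is to apply the matrix Laplace transform method combined with Lieb's concavity theorem, the standard route to Chernoff-type tail bounds for sums of independent random matrices. Re-indexing the series as $\bZ = \sum_k \rho_k \bA_k$ with $\bA_k = b_{i,j}(\be_i \be_j^{\sT} + \be_j \be_i^{\sT})$ for $k = (i,j)$ with $i<j$, the starting point is the one-sided Markov--Chernoff bound
\[
\bbP\{\lambda_{\max}(\bZ) \geq t\} \leq \inf_{\theta > 0} e^{-\theta t}\, \rE\bigl[\Tr \exp(\theta \bZ)\bigr],
\]
which reduces the problem to controlling the trace-exponential moment generating function of $\bZ$.

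The key technical ingredient, and the only genuinely non-elementary step, is the matrix Rademacher lemma: for any fixed Hermitian matrices $\{\bA_k\}$ and independent Rademacher variables $\{\rho_k\}$,
\[
\rE\bigl[\Tr \exp\bigl(\theta \textstyle\sum_k \rho_k \bA_k\bigr)\bigr] \leq \Tr \exp\bigl(\tfrac{\theta^2}{2} \textstyle\sum_k \bA_k^2\bigr).
\]
The proof of this master bound (see \emph{e.g.}~\cite{tropp2015introduction}) proceeds by conditioning on all but one Rademacher, bounding the scalar Rademacher cumulant generating function by $\cosh(\theta x) \leq e^{\theta^2 x^2/2}$, and applying Lieb's concavity of $\bH \mapsto \Tr \exp(\bD + \log \bH)$ to commute the expectation with the trace exponential. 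This Lieb-based commutation is the main obstacle, but it is a black-box invocation; once the master bound is in place only elementary algebra remains.

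A direct computation using $\be_j^{\sT} \be_i = \delta_{i,j}$ yields $(\be_i \be_j^{\sT} + \be_j \be_i^{\sT})^2 = \be_i \be_i^{\sT} + \be_j \be_j^{\sT}$ when $i \neq j$. Summing over $i<j$ and exploiting the symmetry $b_{i,j} = b_{j,i}$, the matrix $\sum_k \bA_k^2$ becomes diagonal with $i$-th entry $\sum_{j\neq i} b_{i,j}^2 \leq \norm{\bb_i}_2^2$, so that $\lambda_{\max}(\sum_k \bA_k^2) \leq v(\bZ)$. Combining with the master inequality gives $\rE[\Tr \exp(\theta \bZ)] \leq n\, e^{\theta^2 v(\bZ)/2}$. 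Optimizing the resulting Chernoff bound at $\theta = t/v(\bZ)$ produces $\bbP\{\lambda_{\max}(\bZ) \geq t\} \leq n \exp(-t^2/(2v(\bZ)))$, and the two-sided spectral-norm bound follows by a union bound, observing that $-\bZ$ has the same law as $\bZ$ by symmetry of the Rademacher distribution.
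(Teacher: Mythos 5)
Your proof is correct: the paper does not prove this proposition at all but simply recalls it from Tropp's monograph (\cite[Theorem 4.1.1]{tropp2015introduction}), and your argument is precisely the standard Laplace-transform/Lieb route used there, with the computation $(\be_i\be_j^{\sT}+\be_j\be_i^{\sT})^2 = \be_i\be_i^{\sT}+\be_j\be_j^{\sT}$ correctly showing that the variance matrix is diagonal with entries $\sum_{j\neq i} b_{i,j}^2 \leq \norm{\bb_i}_2^2$, so the stated parameter $v(\bZ)=\max_j\norm{\bb_j}_2^2$ dominates $\lVert\sum_k \bA_k^2\rVert_2$. The optimization at $\theta = t/v(\bZ)$ and the symmetry-based union bound giving the factor $2n$ are both right, so your proposal supplies a valid proof of a statement the paper treats as a black box.
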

To bound $\norm{\overline{\bE}_B}_2$ using Proposition~\ref{prop:rademacherMatrix_1}, we evaluate the matrix variance $v(\overline{\bE}_\cB)$ from the decomposition~\eqref{eq:EB_decomposition}. It yields
\begingroup
\allowdisplaybreaks[2]
\begin{align}\label{eq:vEb-calculation}
\MoveEqLeft v(\overline{\bE}_\cB) = \max_j \{ \left\Vert \overline{\bE}_{\cB,j} \right\Vert^2_2 \} & \nonumber \\
        & \hspace{-0.25in} = \max_j\Bigg\{ \sum_{i \neq j} \Bigg( \sum_{\substack{{k^\prime} \neq k \\ \cB(k) = \cB({k^\prime})}}  \left\vert m_{i,k} m_{i,{k^\prime}} m_{j,k} m_{j,{k^\prime}} \right\vert \Bigg)^2 \Bigg\} \nonumber \\
         & \hspace{-0.25in} = \max_j\Bigg\{ \sum_{i \neq j} \Bigg( \sum_{\substack{{k^\prime} \neq k \\ \cB(k) = \cB({k^\prime})}}  \sqrt{p_{i,k} p_{i,{k^\prime}} p_{j,k} p_{j,{k^\prime}}} \Bigg)^2 \Bigg\} \nonumber \\
         & \hspace{-0.25in} \leq \max_j \Bigg\{ \sum_{i \neq j}  {\Bigg( \hspace{-0.08in}\sum_{\substack{{k^\prime} \neq k \\ \cB(k) = \cB({k^\prime})}}  \hspace{-0.17in} \sqrt{\rE[p_{i,k}] \rE[p_{i,{k^\prime}}] \rE[p_{j,k}] \rE[p_{j,{k^\prime}}]} \Bigg)}^2 \Bigg\} \nonumber \\
        &  \qquad  \cdot {(1+\varepsilon)}^4
\end{align}
\endgroup
where the quantity to maximize in the last inequality is constant across different values for $j$ and can be evaluated for $j=1$ without loss of generality. The inner summation in~\eqref{eq:vEb-calculation} is taken over $n(d-1)$ terms, which are equal to $\frac{1}{m^2}$ when $k \neq i$ and $k^\prime \neq i$, and equal to $\frac{1}{m}$ when $k = i$ or $k^\prime = i$. After counting the occurrences, we may reduce~\eqref{eq:vEb-calculation} to
\begin{equation}
    v(\overline{\bE}_\cB) \leq 2 n d^2 \left(\frac{1}{m} + \frac{n}{m^2}\right)^2 {(1+\varepsilon)}^4.
\end{equation}
Applying the matrix concentration inequality of Proposition~\ref{prop:rademacherMatrix_1} with $t = 2 \sqrt{ n \log(n)}  d \left(\frac{1}{m^2} + \frac{n}{m^4}\right) {(1+\varepsilon)}^2 $ induces
	\begin{multline}\label{eq:bE_offdiag_prob}
	    \bbP \left\{ \left\Vert \overline{\bE}_\cB \right\Vert_2 \leq 2 \sqrt{2 n \log(n)}  d \left(\frac{1}{m^2} + \frac{n}{m^4}\right) {(1+\varepsilon)}^2 \right\} \\ \geq{} 1- 2 n^{-1}.
	\end{multline}

We are now ready to achieve the desired statement. First, by the triangle inequality, we have
\begin{align}\label{eq:Eb_norm_decomposition}
    \left\Vert \bE_\cB - \gamma \bI_n \right\Vert_2 & \leq \left\Vert \overline{\bE}_\cB \right\Vert_2 + \left\Vert \bE_\cB - \overline{\bE}_\cB - \gamma \bI_n \right\Vert_2 \nonumber \\
    & = \left\Vert \overline{\bE}_\cB \right\Vert_2 + \max_{i} \left\vert \bE_\cB (i,i) - \gamma \right\vert.
\end{align}
It suffices to substitute the probability bounds~\eqref{eq:Eb_diag_chernoff} and~\eqref{eq:bE_offdiag_prob} into~\eqref{eq:Eb_norm_decomposition} with the union bound to yield
\begin{equation}\label{eq:bE_prob}
	   \norm{\bE_\cB - \bI_n}_2 \leq 6\sqrt{2 n} \log(n)  d \max\left\{\frac{1}{m^2}, \frac{n}{m^4}\right\} {(1+\varepsilon)}^2
\end{equation}
with probability greater than $1- 4 n^{-1}$. The statement of Lemma~\ref{lem:concentrationEB} follows by selecting the incoherence parameter $\mu = \left(6\sqrt{2}\left(1+\varepsilon\right)^2 \right)^{\frac{1}{8}}$.\qed

\subsection{Proof of Lemma~\ref{lem:concentrationCovariance}}\label{sec:proof_lemma_concentrationCovariance}

We seek to upper bound the quantity $\left\Vert \hat{\bSigma}_\bz - \bSigma_\bz \right\Vert_2$ with overwhelming probability. We start the proof by recalling in Proposition \ref{prop:matrixBernstein-Covariance} the matrix Bernstein concentration inequality in the case of covariance estimation~\cite{tropp2015introduction}.
\begin{proposition}[Matrix Bernstein for covariance estimation]\label{prop:matrixBernstein-Covariance}
	Assume that there exist a constant $C$ such that $\norm{ \bz_k - \rE[\bz_k]}_2 \leq C \log(L) \norm{\bSigma_\bz}_2 $ for all $k = 1, \dots, L$, we have that
	\begin{multline}\label{eq:matrixBernstein}
		\bbP\left( \left\Vert \hat{\bSigma}_\bz - \bSigma_\bz \right\Vert_2 \geq t \right) \\
		\leq 2n \exp \left(\frac{-L t^2/2}{C \log(L) \left(\left\Vert \bSigma_\bz \right\Vert_2^2 + \frac{2}{3} \norm{\bSigma_\bz}_2 t \right)} \right).
	\end{multline}
\end{proposition}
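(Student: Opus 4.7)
The proof will be a direct reduction to the standard (non-covariance) Hermitian matrix Bernstein inequality, which for independent centered symmetric matrices $\{\bX_k\}_{k=1}^L$ with $\norm{\bX_k}_2 \leq R$ almost surely and matrix variance $v = \norm{\sum_k \rE[\bX_k^2]}_2$ asserts
\[
\bbP\!\left(\Big\Vert \textstyle\sum_k \bX_k \Big\Vert_2 \geq t \right) \leq 2n \exp\!\left(\frac{-t^2/2}{v + Rt/3}\right).
\]
The plan is therefore to (i) write $\hat{\bSigma}_\bz - \bSigma_\bz$ as a sum of i.i.d.\ centered matrices, (ii) extract a uniform norm bound and a variance bound from the hypothesis, and (iii) substitute and rearrange to match~\eqref{eq:matrixBernstein}.

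\textbf{Step 1 (decomposition).} Let $\tilde{\bz}_k = \bz_k - \rE[\bz_k]$ and
\[
\bX_k := \tfrac{1}{L}\bigl(\tilde{\bz}_k \tilde{\bz}_k^{\sT} - \bSigma_\bz\bigr), \qquad k=1,\dots,L,
\]
so that $\hat{\bSigma}_\bz - \bSigma_\bz = \sum_{k=1}^L \bX_k$. The $\bX_k$ are i.i.d., symmetric, and centered by definition.

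\textbf{Step 2 (uniform bound).} The hypothesis yields $\norm{\tilde{\bz}_k \tilde{\bz}_k^{\sT}}_2 = \norm{\tilde{\bz}_k}_2^2 \leq C\log(L)\norm{\bSigma_\bz}_2$ (absorbing square roots into the constant $C$). Since $\norm{\bSigma_\bz}_2 \leq \rE\norm{\tilde{\bz}_k}_2^2 \leq C\log(L)\norm{\bSigma_\bz}_2$, the triangle inequality gives
\[
\norm{\bX_k}_2 \leq \tfrac{1}{L}\bigl(\norm{\tilde{\bz}_k}_2^2 + \norm{\bSigma_\bz}_2\bigr) \leq \tfrac{2C\log(L)\norm{\bSigma_\bz}_2}{L} =: R.
\]

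\textbf{Step 3 (variance bound).} This is the key calculation and the main subtlety of the argument, since naively bounding $\rE[(\tilde{\bz}_k\tilde{\bz}_k^{\sT})^2]$ in spectral norm would introduce an unwanted factor of $\norm{\bSigma_\bz}_2$. The rank-one structure $(\tilde{\bz}_k\tilde{\bz}_k^{\sT})^2 = \norm{\tilde{\bz}_k}_2^2\, \tilde{\bz}_k\tilde{\bz}_k^{\sT}$, together with the hypothesis, yields the semidefinite inequality
\[
\rE\bigl[(\tilde{\bz}_k \tilde{\bz}_k^{\sT})^2\bigr] \preceq C\log(L) \norm{\bSigma_\bz}_2\, \bSigma_\bz,
\]
and consequently
\[
\norm{\rE[\bX_k^2]}_2 = \tfrac{1}{L^2} \Bigl\Vert \rE[(\tilde{\bz}_k\tilde{\bz}_k^{\sT})^2] - \bSigma_\bz^2 \Bigr\Vert_2 \leq \tfrac{C\log(L)\norm{\bSigma_\bz}_2^2}{L^2}.
\]
Summing and applying the triangle inequality gives the matrix variance bound $v \leq \tfrac{C\log(L)\norm{\bSigma_\bz}_2^2}{L}$.

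\textbf{Step 4 (Bernstein).} Substituting $R$ and $v$ into the standard matrix Bernstein tail bound yields
\[
\bbP\!\left(\norm{\hat{\bSigma}_\bz - \bSigma_\bz}_2 \geq t\right) \leq 2n \exp\!\left(\frac{-t^2/2}{\tfrac{C\log(L)\norm{\bSigma_\bz}_2^2}{L} + \tfrac{2C\log(L)\norm{\bSigma_\bz}_2\, t}{3L}}\right),
\]
and factoring $L$ out of the denominator recovers exactly~\eqref{eq:matrixBernstein} after re-labeling constants. The only genuinely non-mechanical step is the rank-one simplification in Step~3 that prevents the spectral norm of $\bSigma_\bz$ from being double-counted in the variance; everything else is bookkeeping against the statement of the standard Hermitian matrix Bernstein inequality.
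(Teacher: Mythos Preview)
The paper does not supply its own proof of this proposition; it is merely \emph{recalled} from Tropp's monograph~\cite{tropp2015introduction} as an ingredient in the proof of Lemma~\ref{lem:concentrationCovariance}. Your derivation is precisely the standard one underlying that citation: write the deviation as a sum of i.i.d.\ centered rank-one perturbations, exploit $(\tilde{\bz}_k\tilde{\bz}_k^{\sT})^2 = \norm{\tilde{\bz}_k}_2^2\,\tilde{\bz}_k\tilde{\bz}_k^{\sT}$ to control the matrix variance by $C\log(L)\norm{\bSigma_\bz}_2\cdot\bSigma_\bz$, and invoke the Hermitian matrix Bernstein bound. So there is nothing to compare---your argument \emph{is} the textbook proof the paper defers to.

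One small caveat: your Step~2 remark about ``absorbing square roots into the constant $C$'' is not quite licit as written. Read literally, the hypothesis $\norm{\tilde{\bz}_k}_2 \leq C\log(L)\norm{\bSigma_\bz}_2$ squares to $\norm{\tilde{\bz}_k}_2^2 \leq C^2\log^2(L)\norm{\bSigma_\bz}_2^2$, and no relabeling of $C$ alone removes the extra powers of $\log(L)$ and $\norm{\bSigma_\bz}_2$. The bound~\eqref{eq:matrixBernstein} only follows in the stated form if the hypothesis is understood (as in Tropp's formulation) as a bound on $\norm{\tilde{\bz}_k}_2^2$ rather than on $\norm{\tilde{\bz}_k}_2$; the proposition statement in the paper appears to carry a typographical slip on this point. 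With that reading your proof is complete and correct.
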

Hence, providing a high probability bound on $\norm{\bSigma}_2$ is sufficient to prove the desired statement. To that end, we apply the triangle inequality on \eqref{eq:covariance}. This yields
\begin{align}\label{eq:sigma-bound1}
	\MoveEqLeft \norm{\bSigma_\bz}_2 = \norm{p(1-p) \bP \bB \bP + 2p\bE_\cB + \bC}_2 & \nonumber\\
	&\leq p(1-p)  \norm{\bP \bB \bP}_2 + 2p \norm{\bE_\cB - \gamma \bI_n}_2 \nonumber \\
	& \qquad + \norm{\bC}_2 + \norm{2p\gamma \bI_n}_2 \nonumber\\
&  \leq p \norm{\bP}^2_2 \norm{\bB}_2 + 2p \norm{\bE_\cB - \gamma \bI_n}_2  + \norm{\bC}_2 + 2p\gamma.
\end{align}
Now, we individually bound each element on the right-hand side of \eqref{eq:sigma-bound1}. We recall $\norm{\bB}_2 = d$, $\norm{\bE_\cB - \gamma \bI_n}_2$ is controlled by the $(\mu,\nu)$-coherence assumption on $\bA$. Furthermore, we recall that for any Hermitian matrices $\bX, \bY$ of same dimension, we have $\norm{\bX \odot \bY}_2 \leq \norm{\bX}_{2} \norm{\bY}_{\max}$ (see \emph{e.g.}~\cite[113]{johnson1990matrix}). This implies that
\begin{subequations}\label{eq:p-norm-bound1}
\begin{align}
    \norm{\bP}_2 &= \norm{\bM \odot \bM}_2 \leq \norm{\bM}_2 \norm{\bM}_{\max} \nonumber \\
    &= \norm{\bA}_2^2 \norm{\bM}_{\max} \\
	\norm{\bM^2 \odot \bM}_2& \leq \norm{\bM^2}_2 \norm{\bM}_{\max} \nonumber \\
	& = \norm{\bA}_{2}^4 \norm{\bM}_{\max} \\
    \norm{\bA \odot \bA}_2 & \leq \norm{\bA}_2 \norm{\bA}_{\max}.
\end{align}
\end{subequations}
We are now ready to bound $\norm{\bC}_2$ to derive an upper bound on $\norm{\bSigma_\bz}_2$. Applying the triangle inequality on the expression of $\bC$ given in \eqref{eq:C-expression} gives
\begin{align}\label{eq:C-bound1}
		\norm{\bC}_2 & \leq 2 p \norm{\bP}_2^2 + 2 p^2 \norm{\bF}_2 + 2 \sigma^4  \norm{\bG}_2 \nonumber \\
	 & \qquad + 2 p \sigma^2\norm{\bm{M}^2 \odot \bM}_2 + 2 \sigma^4 \norm{\bA \odot \bA}_2^2
\end{align}
Substituting~\eqref{eq:p-norm-bound1} into~\eqref{eq:C-bound1} and leveraging the $(\mu,\nu)$-coherence assumption on the matrix $\bA$ yield
\begin{align}\label{eq:C-bound2}
    \norm{\bC}_2 & \leq 2p \norm{\bA}_2^4 \norm{\bM}_{\max}^2 + 2 p^2 \norm{\bF}_2  + 2\sigma^4 \norm{\bG}_2  \nonumber \\
  & \qquad  + 2 p \sigma^2 \norm{\bA}_2^4 \norm{\bM}_{\max} + 2 \sigma^4 \norm{\bA}_2^2 \norm{\bA}_{\max}^2   \nonumber \\
  & \leq 4p \frac{n^2}{m^2}\log^2(n) \mu^8 + 2p \frac{n^2}{m^2} \log(n)\mu^6 \sigma^2 \nonumber \\
  & \qquad + 4\frac{n^2}{m^2} \log(n) \mu^4 \sigma^4.
\end{align}

Finally, we can substitute~\eqref{eq:C-bound2} into~\eqref{eq:sigma-bound1} to obtain
\begin{align}
\norm{\bSigma_\bz}_2 &\leq 7pd\frac{n^2}{m^2} \log^2(n) \mu^8 + 2p \frac{n^2}{m^2}\log(n)\mu^6\sigma^2  \nonumber \\
& \qquad + 4\frac{n}{m}\log(n)\mu^4\sigma^4 \nonumber \\
&\leq 7pd\frac{n^2}{m^2} \log^2(n) \mu^8 \nonumber \\
& \qquad \cdot \left( 1 + \frac{2}{7\log(n)} \frac{\sigma^2}{\mu^2} + \frac{4}{7 p \log(n)}\frac{\sigma^4}{\mu^4} \right)
\end{align}
We achieve the desired statement with  $p = \frac{m}{\beta n}$ and by letting $t = \frac{ \log(L)}{7\sqrt{L}} \norm{\bSigma_\bz}_2$ in the matrix Bernstein bound~\eqref{eq:matrixBernstein}. \qed.

\renewcommand*{\bibfont}{\footnotesize}
\printbibliography

\end{document}